\documentclass[a4paper,USenglish,cleveref, autoref, thm-restate]{lipics-v2021}

\usepackage{amsmath}
\usepackage{amssymb}
\usepackage{booktabs}
\usepackage{xspace}
\usepackage{graphicx}
\usepackage{subcaption}
\usepackage{todonotes}
\usepackage{xcolor}

\definecolor{colorRandom}{HTML}{7570B3}  %
\definecolor{colorSA}{HTML}{D95F02}      %
\definecolor{colorMILP}{HTML}{1B9E77}    %

\newcommand{\mRandom}{\textcolor{colorRandom}{\textbf{Random}}}
\newcommand{\mSA}{\textcolor{colorSA}{\textbf{SA}}}
\newcommand{\mMILP}{\textcolor{colorMILP}{\textbf{MILP}}}

\title{Minimum-Width Drawing of Trees with Sized Vertices}

\titlerunning{Minimum-Width Tree Drawings with Sized Vertices} %

\author{Markus Wallinger}{Technical University of Munich, Germany}{markus.wallinger@tum.de}{https://orcid.org/0000-0002-2191-4413}{}%

\author{Oscar Navarro}{Technical University of Munich, Germany}{oscar.navarro@tum.de}{https://orcid.org/0009-0003-7052-8345}{}%

\author{Stephen G. Kobourov}{Technical University of Munich, Germany}{stephen.kobourov@tum.de}{https://orcid.org/0000-0002-0477-2724}{}%

\authorrunning{M.\,Wallinger, O.\,Navarro, and S.\,Kobourov}

\Copyright{Markus Wallinger, Oscar Navarro, and Stephen Kobourov}

\ccsdesc[500]{Human-centered computing~Graph drawings}
\ccsdesc[300]{Theory of computation~Computational geometry}

\keywords{Tree drawing, sized vertices, minimum width, integer linear programming, heuristics} %

\category{Track 2, Long Paper} %

\relatedversion{} %

\supplement{Source code and benchmarks are available at \url{https://osf.io/dfkau}.}

\nolinenumbers %

\EventEditors{Debajyoti Mondal and Rahnuma Islam Nishat}
\EventNoEds{2}
\EventLongTitle{34th International Symposium on Graph Drawing and Network Visualization (GD 2026)}
\EventShortTitle{GD 2026}
\EventAcronym{GD}
\EventYear{2026}
\EventDate{August 19--21, 2026}
\EventLocation{St.\ Catharines, Ontario, Canada}
\EventLogo{}
\SeriesVolume{XXX}%
\ArticleNo{XX}%

\newcommand{\mwtdr}{\textsc{MWD}\xspace}

\begin{document}

\maketitle

\begin{figure}[h!]
  \centering
  \begin{subfigure}[b]{0.325\textwidth}
    \centering
    \includegraphics[width=\textwidth]{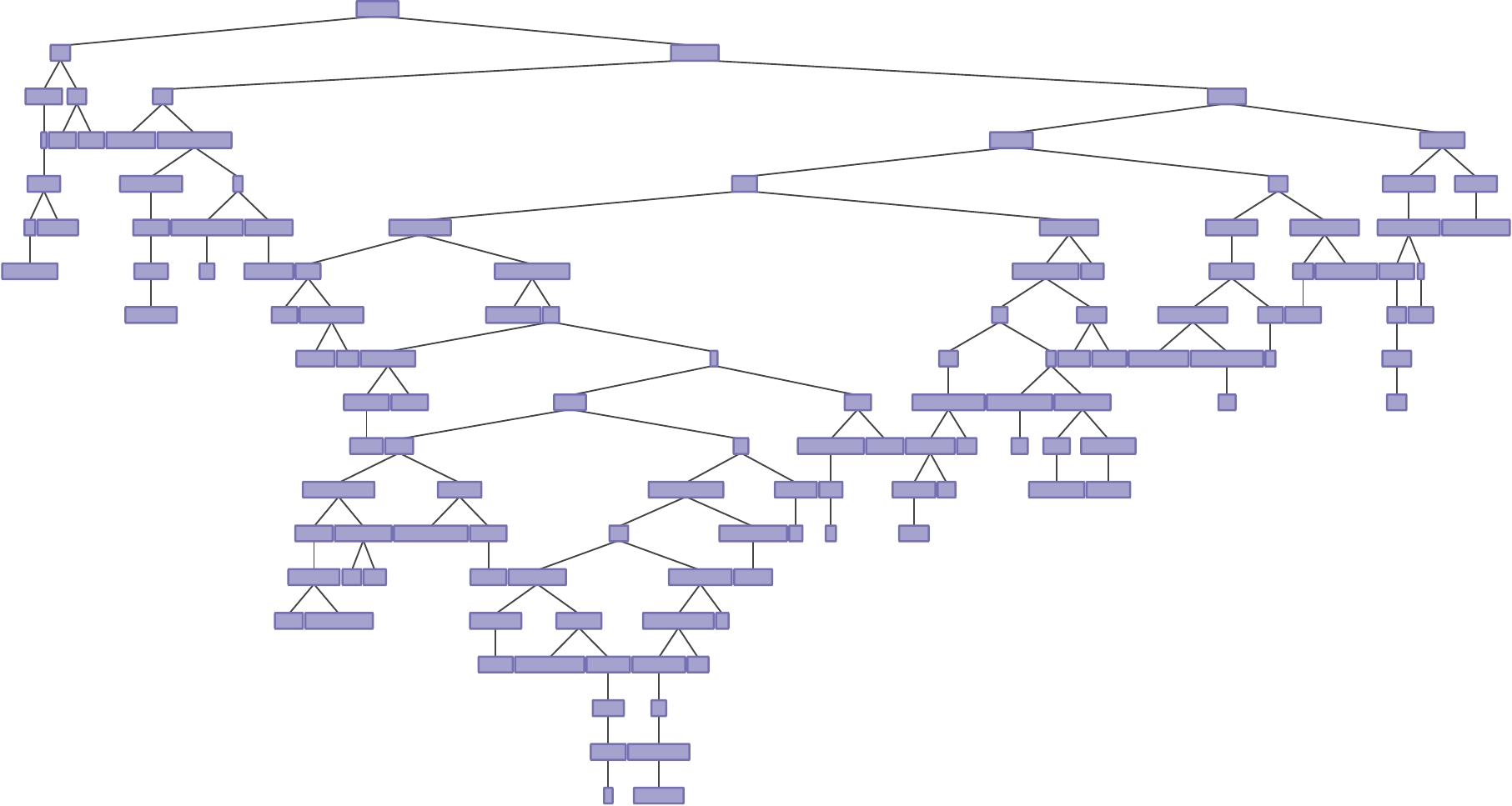}
    \label{fig:teaser-input}
  \end{subfigure}
  \hfill
  \begin{subfigure}[b]{0.325\textwidth}
    \centering
    \includegraphics[width=\textwidth]{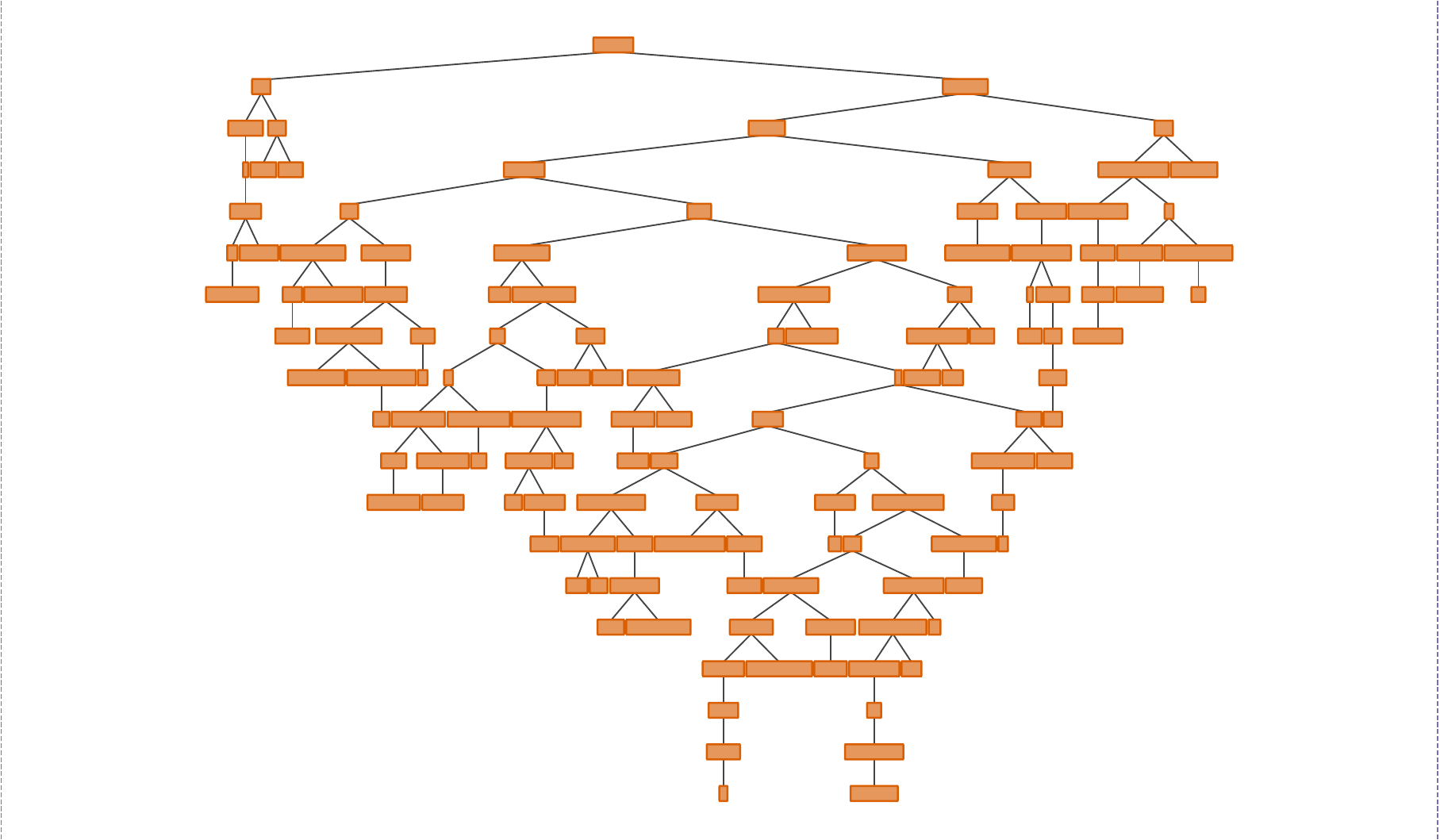}
    \label{fig:teaser-heuristic}
  \end{subfigure}
  \hfill
  \begin{subfigure}[b]{0.325\textwidth}
    \centering
    \includegraphics[width=\textwidth]{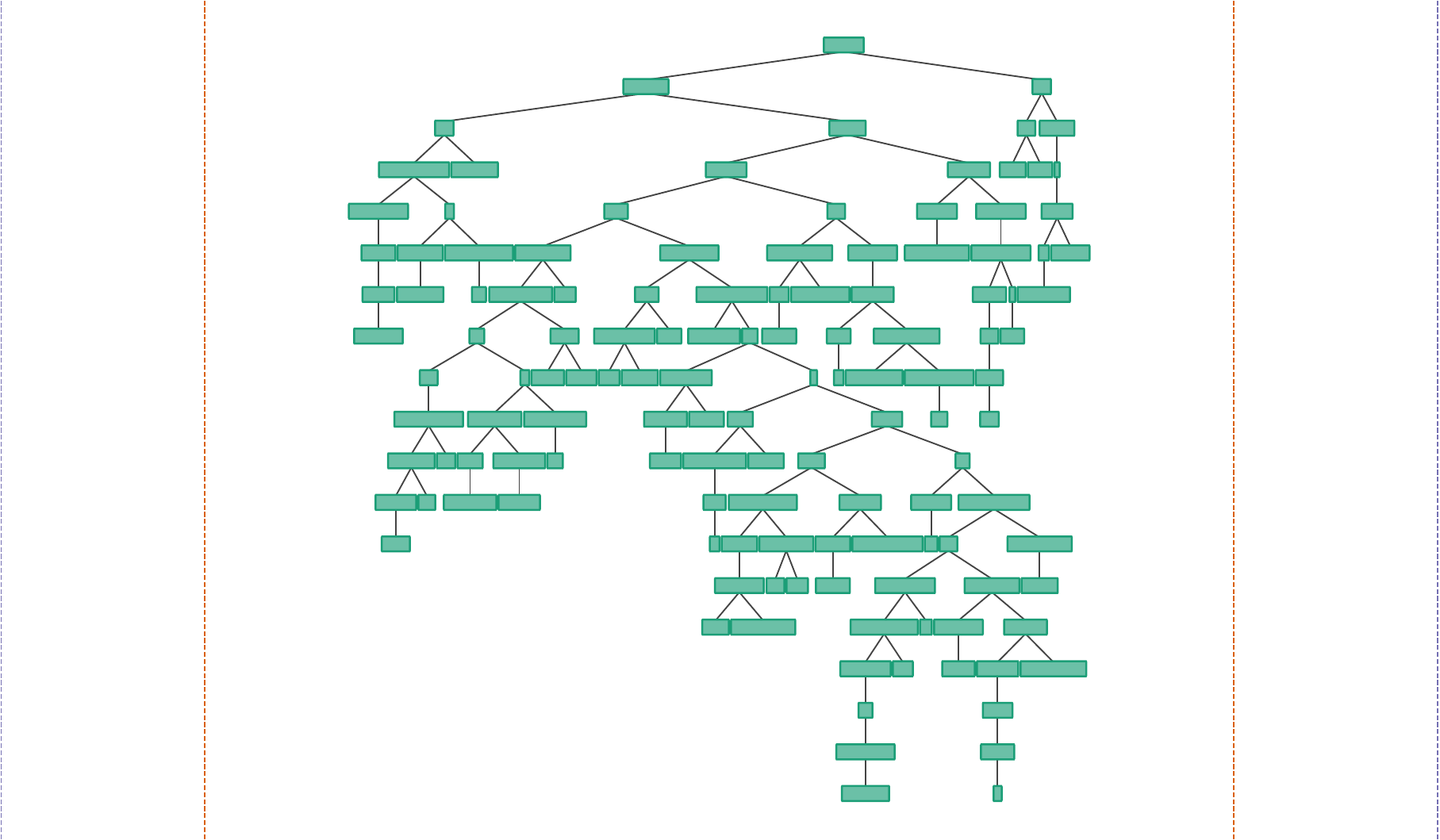}
    \label{fig:teaser-optimal}
  \end{subfigure}
  \caption{The same unordered tree drawn (left) in random order, (middle) under the order found by our simulated annealing heuristic, and (right) under a width-optimal order computed by the MILP. Reordering reduces the drawing width by $28.39\%$ and $48.53\%$, respectively (vertical bars).}
  \label{fig:teaser}
\end{figure}

\begin{abstract}
Trees arise in many applications and computing nice tree layouts is a classical problem in information visualization. 
In many practical settings, vertices need to be represented as rectangles with a given width and height rather than as points. 
When an order over the children of each vertex is given, polynomial-time algorithms are known that produce drawings adhering to various drawing conventions. However, in many applications, the order of children carries no semantic meaning, and choosing it well can significantly reduce the drawing's width.
In this paper, we study the problem \textsc{Min-Width Tree Drawing with Reordering} (\textsc{MWD}): given a rooted tree whose vertices have prescribed widths, find a sibling order at each internal vertex that minimizes the width of the resulting layered drawing. We show that the problem is \textsf{NP}-complete, even on binary trees with unit-width vertices. 
We present a mixed integer linear program that solves \textsc{MWD} exactly on moderately sized instances, and a heuristic that is fast and delivers good results in practice. 
We evaluate both approaches against a baseline on synthetic and real-world datasets, where reordering reduces drawing width by a median of $\approx20\%$ and by up to $\approx55\%$ on individual instances.
The heuristic computes its layouts in under a second and, when the MILP proves optimality, it stays within $25\%$ of the optimal width in three-quarters of all instances.

\subparagraph{Generative AI Declaration}
Claude Opus 4.7 was used throughout the research process to generate code and refine writing.
\end{abstract}

\section{Introduction}
\label{sec:intro}

Trees are among the most ubiquitous abstractions in computer science, and node-link diagrams are widely used to visualize them~\cite{Schulz11}.
In a \emph{tidy drawing} of a rooted tree, vertices at the same depth are vertically aligned, parents are centered above their children, and no two vertices overlap.
This convention was introduced by Wetherell and Shannon~\cite{WetherellS79} and is now one standard in tree visualization.
The classic Reingold--Tilford algorithm~\cite{ReingoldT81} produces tidy layered drawings of rooted trees with point as vertices in linear time.
More recently, van der Ploeg~\cite{Ploeg14} extended these results to vertices with inherent widths and heights, both in the layered convention and in a non-layered relaxation in which a child is placed immediately below its parent rather than at a vertically aligned depth.
A common feature of such tidy drawing algorithms is that they take the \emph{order} of children at each internal node as part of the input.
This is well-motivated for trees arising from parse or syntax trees, where the order is semantically meaningful.
In many other applications, however, the underlying tree is unordered: examples include taxonomies, file-system hierarchies, organization charts, class hierarchies in software engineering, and phylogenetic trees.
In such settings, the sibling order is a degree of freedom, and a poor choice can lead to much wider than necessary layouts.

The computational complexity of choosing a width-optimal sibling order has been studied in several drawing conventions~\cite{Hayashi98,MarriottS04,VialGJO19,Biedl22}.
None of this prior work addresses the layered tidy setting with vertices of prescribed widths, which is the gap we close.
\Cref{fig:teaser} shows the same unordered tree under three sibling orders, illustrating how reordering reduces width.

\textbf{Our contributions.}
We introduce and study the \textsc{Min-Width Tree Drawing with Reordering} (\mwtdr) problem: given a rooted tree $T$ with vertex widths, find a sibling order at each internal node so that the resulting tidy layered drawing has minimum width.
Concretely, we contribute:
\begin{enumerate}
    
    \item \textbf{An \textsf{NP}-completeness proof} (\cref{sec:hardness}) by reduction from Marriott--Stuckey's unordered layered binary tree layout~\cite{MarriottS04}.
    The reduction sets all vertex widths to one, showing that the hardness of \mwtdr comes from the order choice alone and is not an artifact of having sized vertices.

    \item \textbf{An exact mixed-integer linear program} (\cref{sec:milp}) that solves \mwtdr on moderately sized instances and serves as a ground-truth oracle in our experiments.

    \item \textbf{A heuristic} (\cref{sec:heuristic}) based on simulated annealing with van der Ploeg's drawing algorithm~\cite{Ploeg14} as a subroutine.

    \item \textbf{An experimental evaluation} (\cref{sec:experiments}) on synthetic benchmarks and real-world instances, comparing the heuristic against the ILP optimum and against a natural baseline.
\end{enumerate}

\section{Related Work}
\label{sec:related}

For broader context on tree-drawing conventions, see the surveys by Schulz~\cite{Schulz11} and Rusu~\cite{Rusu13}.

\textbf{Tidy drawings of trees.}
A drawing of a rooted tree is called \emph{tidy} if vertices at the same depth are vertically aligned in a layer, every parent is horizontally centered above its children, and no two vertices overlap.
This convention has been the dominant model for tree visualization since the linear-time algorithms of Knuth~\cite{Knuth68} and Wetherell and Shannon~\cite{WetherellS79}.
Reingold and Tilford~\cite{ReingoldT81} strengthened the aesthetic guarantees by requiring identical subtrees to be drawn identically.
Walker~\cite{Walker90} added support for unbounded degree and reflection symmetry. Walker's original algorithm runs in $O(n^2)$, but Buchheim et al.~\cite{BuchheimJL06} showed a linear time implementation.
All of these algorithms target trees with point-like vertices with real-valued coordinates, using a layered convention.
Van der Ploeg~\cite{Ploeg14} extended the linear-time guarantee to vertices with prescribed widths and heights and gave both layered and non-layered variants of the resulting algorithm.
None of these algorithms chooses the sibling order; instead, they assume it is part of the input.

\textbf{Width minimization on unordered trees.}
The complexity of width-optimal drawings of unordered trees has been studied across several drawing conventions.
Supowit and Reingold~\cite{SupowitR82} showed that drawing trees nicely is \textsf{NP}-hard when the horizontal coordinates must be integers, whereas the same problem with real-valued coordinates reduces to linear programming and is solvable in polynomial time.
Hayashi and Masuda~\cite{Hayashi98} obtained complexity results for minimum-width drawings of rooted unordered trees.
Marriott and Stuckey~\cite{MarriottS04} introduced the unordered layered tree drawing convention for binary rooted trees in which a parent's $x$-coordinate is the average of its children's $x$-coordinates and proved that finding a minimum-width drawing in this convention is \textsf{NP}-complete by reduction from \textsc{SAT}, notably without requiring integer coordinates.
Their reduction is the basis of our hardness proof in \cref{sec:hardness}.
More recently, Besa et al.~\cite{VialGJO19} established \textsf{NP}-hardness for minimum-width orthogonal upward drawings of phylogenetic trees with fixed edge lengths, gave a linear-time algorithm for the ordered case based on a horizontal-visibility constraint graph, and evaluated several heuristics on trees from \textsc{TreeBase}.
Their drawing convention (orthogonal, upward, with leaves allowed to lie above neighboring branches) differs from ours in that there is no centering constraint and vertices have only an inherent height (set by edge length) but not an inherent width.
Misue~\cite{Misue24} relaxes the same-layer horizontal-line constraint by 
folding sibling leaves onto multiple rows, optimizing for area adaptivity 
rather than width.
Klawitter and Zink~\cite{KlawitterZ23} study a related problem on tree 
drawings constrained to vertical columns, where the objective is to 
minimize edge crossings rather than width.

\textbf{Polynomial cases for unordered trees.}
The results in this paragraph concern grid drawings, where vertices are placed at integer coordinates, and the width is measured in grid columns.
For straight-line upward drawings with point-like vertices, Biedl~\cite{Biedl22} showed that the minimum width in the unordered model equals the rooted pathwidth and can be computed in linear time.
Eades et al.~\cite{EadesLL92} %
gave polynomial algorithms for minimum-width and minimum-area unordered HV-drawings (edges must be horizontal or vertical line segments) of binary trees.
Bachmaier and Matzeder~\cite{BachmaierM13} studied unordered tree drawings on $k$-grids.
Garg and Rusu~\cite{GargR04} gave a linear-area straight-line algorithm for unordered binary trees with user-controlled aspect ratio.
The common thread is that polynomial-time results exist precisely when the drawing convention removes the centering constraint that makes the variant studied by Marriott and Stuckey hard, or when vertices are points rather than rectangles with prescribed widths.

\textbf{Sized vertices in graph drawing.}
Di Battista et al.~\cite{BattistaDPP99} considered orthogonal and quasi-upward drawings with vertices of prescribed size, but for general graphs and without the unordered tree question.
Marriott et al.~\cite{MarriottSGPB11} discuss hi-trees, in which the parent--child relationship is visualized either by a link or by containment, and use the Reingold--Tilford pipeline as a subroutine.
We are not aware of prior work that combines (i) the layered tidy convention, (ii) sized vertices, and (iii) the freedom to choose the sibling order, which is the setting of this paper. Like the unordered layered convention discussed above, our model uses real-valued coordinates.

\section{Background and Notation}
\label{sec:background}

\begin{figure}[t]
  \centering
  \includegraphics[width=0.8\textwidth]{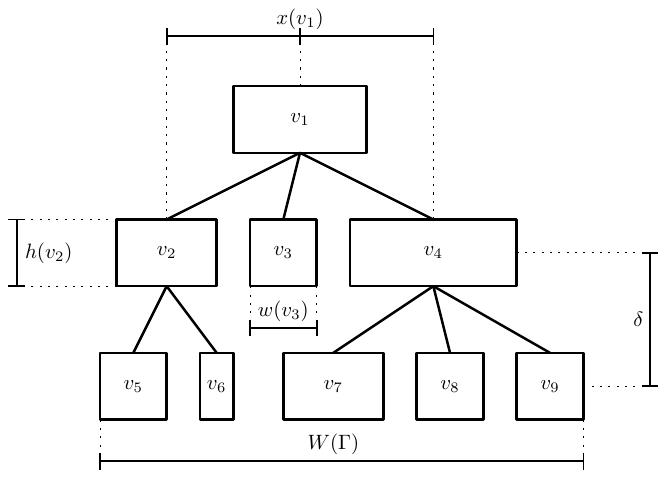}
  \caption{A tidy layered drawing of a sized tree, with bounding boxes of prescribed widths $w(v)$, layer distance $\delta$, and total drawing width $W(\Gamma)$. The inset shows the centering rule (A3): $v_1$ sits at the midpoint of the centers of its leftmost child $v_2$ and rightmost child $v_4$; the middle child $v_3$ does not influence the parent's position.}
  \label{fig:background}
\end{figure}

We illustrate notation in \Cref{fig:background}. Throughout the paper, $T = (V, E)$ denotes a finite rooted tree with vertex set $V$ and edge set $E$, with a designated root $r \in V$.
For every non-root $v \in V$, we write $\mathrm{parent}(v)$ for its parent and $\mathrm{children}(v)$ for the (possibly empty) set of its children.
We denote by $n = |V|$ the number of vertices and by $\Delta$ the maximum number of children of any vertex.
The \emph{depth} $d(v)$ of a vertex is the length of the unique path from $r$ to $v$, and the \emph{depth} of $T$ is $\max_v d(v)$.
We write $L_i = \{v \in V : d(v) = i\}$ for the set of vertices at depth $i$.

A \emph{sibling order} of $T$ is, for each internal vertex $v$, a permutation $\pi_v$ of $\mathrm{children}(v)$.
A pair $(T, \pi)$ in which a sibling order has been fixed at every internal vertex is an \emph{ordered tree}.

In our setting, every vertex $v \in V$ has a prescribed \emph{width} $w(v) \in \mathbb{R}_{>0}$ and \emph{height} $h(v) \in \mathbb{R}_{>0}$.
In the rest of this paper we assume that the height is a fixed constant for all vertices, e.g., corresponding to the font size used for labels.
We refer to the pair $(w, h)$ collectively as the \emph{sizes} of $T$ and call $T$ together with its sizes a \emph{sized tree}.

\subsection{Tidy layered drawings and aesthetic criteria}

We adopt the \emph{layered} drawing convention of Reingold and Tilford~\cite{ReingoldT81}.
A \emph{drawing} $\Gamma$ of a sized ordered tree $(T, \pi, w)$ assigns to each $v \in V$  horizontal $x(v) \in \mathbb{R}$ and vertical $y(v) \in \mathbb{R}$ coordinates representing the center of $v$'s axis-aligned bounding box.
The vertical position is determined entirely by depth, with a fixed layer distance $\delta>0$ .
Each vertex $v$ thus occupies the axis-aligned rectangle $B(v) = [x(v) - w(v)/2,\; x(v) + w(v)/2] \times [y(v) - h(v)/2,\; y(v) + h(v)/2]$
Every vertex at depth $i$ has the same vertical coordinate, and only the
horizontal coordinate is a degree of freedom.

\textbf{Enforced Aesthetics.}
A drawing is \emph{tidy} if it satisfies three aesthetic criteria, due to Reingold and Tilford~\cite{ReingoldT81} and naturally extending to sized vertices:

\begin{description}
  \item[(A1) No overlap.] $B(u) \cap B(v) = \emptyset$, for all $u, v \in V$ with $u \neq v$.
  \item[(A2) Sibling order.] For every internal vertex $v$ with children $c_1, c_2, \dots, c_k$ in the order given by $\pi_v$, we have $x(c_1) < x(c_2) < \cdots < x(c_k)$, so that the chosen $\pi$ constrains the geometric layout.
  \item[(A3) Centered parents.] For every internal vertex $v$ with leftmost child $c_L$ and rightmost child $c_R$ under $\pi_v$,
  \begin{equation}
  x(v) = \tfrac{1}{2}\bigl(x(c_L) + x(c_R)\bigr).
  \label{eq:centering}
  \end{equation}
  When $v$ has a single child $c$, this reduces to $x(v) = x(c)$.
\end{description}

For point-like vertices (the original Reingold--Tilford setting) and for vertices of equal width, rule A3 coincides with the alternative formulation that centers the parent's bounding box above the joint bounding box of its leftmost and rightmost children. For vertices of varying width, the midpoint-of-centers rule used here is the more compact convention, as it allows a parent to overhang the wider extreme child, so that adjacent parent subtrees can interleave their overhangs at upper layers, whereas the joint-bounding-box rule confines each parent within its children's joint extent and propagates extra width upward.

The \emph{width} of a drawing is the horizontal extent of the union of all bounding boxes:
\begin{equation}
W(\Gamma) = \max_{v \in V}(x(v) + w(v)/2) - \min_{v \in V}(x(v) - w(v)/2).
\label{eq:width}
\end{equation}

\textbf{Aesthetics we do not enforce.}
The classical Reingold--Tilford convention includes \emph{subtree isomorphism} (isomorphic subtrees drawn congruently) and \emph{reflection symmetry} (the drawing of the reflected tree is the mirror image of the original), neither of which we enforce.
Subtree isomorphism is automatically satisfied by recursive layout algorithms, but would widen drawings in our setting; reflection symmetry can be added as a width-preserving post-processing step~\cite{Ploeg14, Walker90}, so our hardness result and algorithms are unaffected.

We also do not impose the stronger \emph{subtree separation} property of some classical algorithms~\cite{ChanGKT02,GargR04,ReingoldT81}, which requires bounding rectangles of node-disjoint subtrees not to overlap, since it can substantially widen drawings and is incompatible with the compact tidy conventions we follow~\cite{MarriottS04,Ploeg14}.

\subsection{Problem statement}

For a sized tree $T$ with prescribed widths, every choice of sibling order $\pi$ induces a unique tidy drawing $\Gamma_\pi$ of minimum width subject to (A1)--(A3); the algorithm of van der Ploeg~\cite{Ploeg14}, an extension of the linear-time Reingold--Tilford procedure to sized vertices, computes $\Gamma_\pi$ in $O(n)$ time for a given order $\pi$.
Different choices of $\pi$ yield drawings of different widths, sometimes by a substantial factor (\cref{fig:teaser}).
We write $W^\pi(T) = W(\Gamma_\pi)$ for the width of the tidy drawing under sibling order $\pi$, and define
\begin{equation}
W^\ast(T) = \min_{\pi} W^\pi(T)
\label{eq:wstar}
\end{equation}
to be the minimum achievable width over all sibling orders.

\begin{definition}[\textsc{Min-Width Tree Drawing with Reordering}, \mwtdr]
\label{prob:mwtdr}
Given a sized rooted tree $T = (V, E, w, h)$, find a sibling order $\pi^\ast$ such that $W^{\pi^\ast}(T) = W^\ast(T)$.
\end{definition}

In its decision form, \mwtdr takes an additional input $W \in \mathbb{R}_{\geq 0}$ and asks whether $W^\ast(T) \leq W$.
We will primarily work with the optimization version, but our hardness result in \cref{sec:hardness} establishes hardness of the decision form by reduction from a known \textsf{NP}-complete decision problem.

\section{Hardness}
\label{sec:hardness}

\begin{figure}[t]
  \centering
  \begin{subfigure}[t]{0.42\textwidth}
    \centering
    \includegraphics[width=\textwidth,page=1]{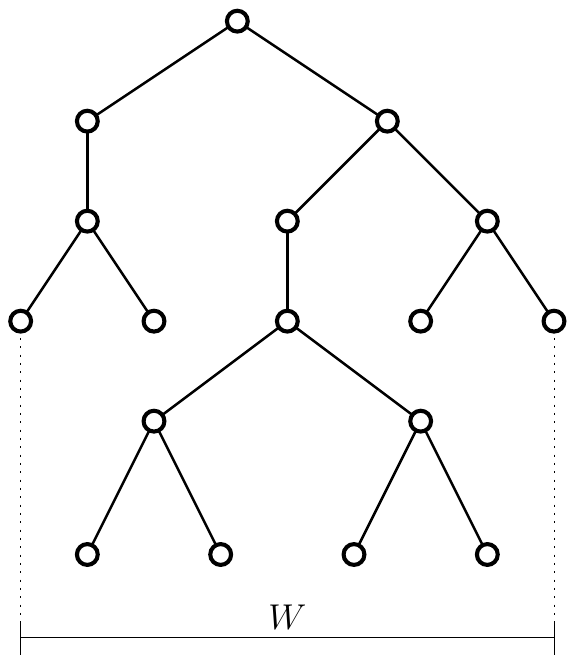}
  \end{subfigure}
  \hfill
  \begin{subfigure}[t]{0.42\textwidth}
    \centering
    \includegraphics[width=\textwidth,page=2]{figures/MS-reduction.pdf}
  \end{subfigure}
\caption{A \textsc{ULBTL} instance drawn with point vertices (left) and the corresponding \mwtdr instance with unit-width boxes (right). Identifying point coordinates with bounding-box centers preserves all aesthetics; the bounding-box width exceeds the point-coordinate range by exactly one. Gaps between boxes are for visual clarity only, and neighboring boxes touch in the drawing.}
  \label{fig:reduction}
\end{figure}

We show \mwtdr is \textsf{NP}-complete by reduction from \textsc{Unordered Layered Binary Tree Layout} (\textsc{ULBTL}), shown \textsf{NP}-complete by Marriott and Stuckey~\cite{MarriottS04}.
A \textsc{ULBTL} drawing places each vertex of a binary tree at integer-spaced layers, separates same-layer vertices by horizontal distance at least one, and centers each parent at the average of its children's $x$-coordinates; the sibling order is free. The width is $\max_v x_v - \min_v x_v$. Note that \textsc{ULBTL} does not restrict to integer positions for vertices, as previous results~\cite{SupowitR82} already show this is \textsf{NP}-hard for ordered trees.

\begin{theorem}\label{thm:hardness}
\mwtdr is \textsf{NP}-complete, even on binary trees with unit-width vertices.
\end{theorem}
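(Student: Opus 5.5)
Membership in \textsf{NP} is straightforward: the certificate is a sibling order $\pi$, and from it we compute $\Gamma_\pi$ and its width $W^\pi(T)$ in $O(n)$ time using van der Ploeg's algorithm~\cite{Ploeg14}. We then compare this width with the threshold $W$. One caveat is that the bit-length of the coordinates must stay polynomial. This holds because each coordinate is obtained by a linear number of additions and halvings of the input widths, so its encoding size is polynomial in the input.

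For hardness, we reduce from \textsc{ULBTL}~\cite{MarriottS04}. Given a \textsc{ULBTL} instance, consisting of a binary tree $T$ and a width bound $K$, we build the \mwtdr instance on the same tree with $w(v)=1$ for every $v$, a constant height $h$, and a layer distance $\delta > h$ so that distinct layers cannot overlap. The threshold is $W = K+1$. The core of the argument is a bijection between drawings that preserves all constraints.

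Take a point drawing $x$ in the \textsc{ULBTL} sense and regard each $x_v$ as the center of a unit box. Two boxes on the same layer are disjoint precisely when their centers are at distance at least $1$. Strictly, disjoint closed boxes require distance $>1$, and we will adopt the touching convention of \cref{fig:reduction}, treating (A1) as interior-disjointness. Boxes on different layers never intersect because $\delta>h$. For binary trees, centering at the average of the two children is exactly (A3), and a single child lies directly below its parent in both models. Since every box has width $1$, the box width is $\max_v x_v - \min_v x_v + 1$.

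Sibling order needs a little more care. \textsc{ULBTL} leaves the order free, but any drawing induces the order given by the $x$-coordinates of siblings. Those siblings lie on the same layer, so they are at distance at least $1$ and are strictly ordered. We take this induced order as $\pi$, and (A2) is then satisfied. Conversely, a tidy drawing $\Gamma_\pi$ for any $\pi$ is a valid \textsc{ULBTL} drawing with point width $W(\Gamma_\pi)-1$. Hence $W^\ast(T)\le K+1$ if and only if the \textsc{ULBTL} instance is a yes-instance, and the construction runs in polynomial time.

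The main obstacle is making sure the two conventions match exactly. The first issue is the boundary case of touching boxes, that is, whether (A1) allows shared boundaries. If it does not, we instead note that the Marriott--Stuckey gadget yields a gap between the optimal width for yes- and no-instances. The second issue is checking that Marriott--Stuckey's model separates only same-layer vertices, with no requirement across layers or on subtrees, and that its centering rule for single children agrees with (A3). If their construction has unary vertices handled differently, we would replace each unary vertex by a binary gadget or check directly that their trees are full binary trees.
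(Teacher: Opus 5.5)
Your proposal is correct and follows the paper's proof essentially step for step: the sibling order serves as the certificate, and hardness comes from the reduction from \textsc{ULBTL} with unit widths and threshold $K+1$, identifying points with box centers. You are also more careful than the paper about choosing $\delta>h$, reading (A1) as interior-disjointness (as the paper implicitly does), and coordinate bit-lengths, so the hedges in your last paragraph are unnecessary.

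One caveat applies equally to the paper's membership argument. Van der Ploeg's algorithm, like Reingold--Tilford, composes subtree layouts rigidly and need not return a minimum-width drawing for a fixed $\pi$, since spreading two siblings beyond their minimum separation can let a higher pair move closer. It is therefore safer to verify a certificate by solving the polynomial-size linear program for that fixed order, whose optimal vertex also has the polynomial bit-length you want.
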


\begin{proof}
The problem is in \textsf{NP}, as we can verify in polynomial time whether a given tree has width less than or equal to the provided value, given the sibling order as a certificate. Given the sibling order $\pi$, the corresponding tidy drawing $\Gamma_\pi$ is computed in $O(n)$ time with  van der Ploeg's algorithm~\cite{Ploeg14}, and the realized width is obtained as the difference between the rightmost and leftmost sides of the vertex bounding boxes. 

The hardness argument is via reduction from \textsc{ULBTL}. Given a \textsc{ULBTL} instance $T$, set $w(v) = 1$ for every $v \in V$ and inter-layer distance $\delta = 1$ to obtain an \mwtdr instance $T'$ on the same underlying tree.
We claim $T$ has a \textsc{ULBTL} drawing of width $W$ if and only if $T'$ has a tidy layered drawing of width $W + 1$.
Identifying the point coordinate $x_v$ in $T$ with the bounding-box center $x(v) = x_v$ in $T'$: the layered constraint matches the vertical rule of \cref{sec:background} since both place children one unit below their parents; non-overlap (A1) matches the unit horizontal separation between same-layer points; sibling order (A2) is preserved by construction; and centering (A3) for binary nodes coincides with the \textsc{ULBTL} averaging rule directly, regardless of vertex widths.
The bounding-box width exceeds the point-coordinate range by exactly one, since each box extends a half-unit beyond the point on each side.
The reverse direction is symmetric. \cref{fig:reduction} illustrates the reduction.
\end{proof}

\section{An Exact Mixed Integer Linear Program}
\label{sec:milp}

We formulate \mwtdr as a mixed-integer linear program (MILP) that simultaneously determines the sibling order at every internal vertex and the corresponding horizontal coordinates.
The optimum equals $W^\ast(T)$ as defined in~\eqref{eq:wstar}.

\subsection{Formulation}

The MILP has three groups of variables: a continuous coordinate $x(v) \in \mathbb{R}_{\geq 0}$ for every $v \in V$ giving the horizontal center of $v$, a continuous variable $W \in \mathbb{R}_{\geq 0}$ for the layout width, and a binary variable $o_{uv} \in \{0,1\}$ for every ordered pair $(u,v)$ of siblings with the convention that $o_{uv} = 1$ iff $x(u) > x(v)$, together with the identification $o_{vu} \equiv 1 - o_{uv}$.
For cross-parent same-depth pairs $(u, v)$, the order is inherited from the ancestors: we write $o_{uv}$ as a shorthand for $o_{\mathrm{parent}(u),\, \mathrm{parent}(v)}$, applied recursively until the parents are siblings.
This avoids introducing redundant binary variables.
We use a single big-$M$ constant $M = \sum_{v \in V} w(v)$, which upper-bounds the layout width: by non-overlap, any layer's extent is at most the sum of its vertex widths, which is in turn at most $M$.

The objective function of the MILP minimizes the layout width:
\begin{equation}
\min\ W
\label{eq:objective}
\end{equation}
The remaining constraints encode the aesthetic criteria (A1)--(A3) from \cref{sec:background} and the relationship between $W$ and the horizontal coordinates.

\textbf{Non-overlap (A1).}
For every pair $(u, v) \in L_i$ at depth $i$, the centers must be separated
by at least the sum of half-widths:
\begin{align}
x(u) - x(v) &\geq \tfrac{w(u) + w(v)}{2} - M (1 - o_{uv}), \label{eq:noover-1}\\
x(v) - x(u) &\geq \tfrac{w(u) + w(v)}{2} - M\, o_{uv}. \label{eq:noover-2}
\end{align}
Exactly one of the two inequalities is active depending on $o_{uv}$; the other is rendered inactive by the big-$M$ slack.

\textbf{Layer ordering (A2).}
For every internal vertex $v$ and every triple of distinct children $u, v', z \in \mathrm{children}(v)$, transitivity
\begin{equation}
o_{u v'} + o_{v' z} - o_{u z} \leq 1
\label{eq:trans}
\end{equation}
forbids the cyclic assignment $o_{u v'} = o_{v' z} = 1$ with $o_{u z} = 0$.
Cross-parent same-depth pairs inherit their ancestors' order via the aliasing convention, which combined with non-overlap prevents subtrees of different parents from interleaving.

\textbf{Parent centering (A3).}
For every internal vertex $v$, the centering rule of (A3) places $x(v)$ at the midpoint of the centers of its leftmost and rightmost children.
For binary trees, both children are unambiguously identifiable and the rule is a direct equality:
\begin{align}
x(v) &= \tfrac{1}{2}\bigl(x(c_1) + x(c_2)\bigr) && \text{if } \mathrm{children}(v) = \{c_1, c_2\}, \label{eq:center-binary}\\
x(v) &= x(c) && \text{if } \mathrm{children}(v) = \{c\}. \label{eq:center-unary}
\end{align}
For internal vertices with three or more children, the leftmost and rightmost children depend on the sibling order, which is itself a decision variable.
We introduce two auxiliary continuous variables $x_L(v), x_R(v) \in \mathbb{R}$ representing the centers of the leftmost and rightmost children of $v$.
Writing $k = |\mathrm{children}(v)|$ and, for each $c \in \mathrm{children}(v)$, $r_c = \sum_{c' \neq c} o_{c', c}$ for the number of siblings to the right of $c$ and $\ell_c = \sum_{c' \neq c} o_{c, c'}$ for the number to the left, we constrain
\begin{align}
x_L(v) &\leq x(c) \leq x_R(v), \label{eq:xLR-bounds}\\
x_L(v) &\geq x(c) - M(k - 1 - r_c), \qquad x_R(v) \leq x(c) + M(k - 1 - \ell_c), \label{eq:xLR-tight}
\end{align}
for every $c \in \mathrm{children}(v)$.
Inequality~\eqref{eq:xLR-bounds} makes $x_L(v)$ a lower bound and $x_R(v)$ an upper bound on the children's centers.
The Big-M inequalities in~\eqref{eq:xLR-tight} tighten these bounds at exactly the leftmost and rightmost children: when $c$ is the leftmost child, all $k-1$ other siblings sit to its right, so $r_c = k-1$ and the Big-M slack vanishes, forcing $x_L(v) = x(c)$; symmetrically for $x_R(v)$ at the rightmost child.
The centering equation is then
\begin{equation}
x(v) = \tfrac{1}{2}\bigl(x_L(v) + x_R(v)\bigr).
\label{eq:center-nary}
\end{equation}

\textbf{Width and anchoring.}
The objective variable $W$ must dominate the right edge of every vertex's bounding box, and the layout must be anchored to a fixed origin so that $W$ has a well-defined meaning.
We enforce both via two inequalities, one per vertex:
\begin{equation}
W \geq x(v) + \tfrac{w(v)}{2}, \qquad x(v) \geq \tfrac{w(v)}{2} \qquad \forall\, v \in V.
\label{eq:width-bound}
\end{equation}
The first inequality is the standard formulation of $W = \max_v(x(v) + w(v)/2)$.
Since the objective minimizes $W$, this inequality is tight for at least one vertex (the rightmost), making $W$ exactly the width of the layout as defined in~\eqref{eq:width}.
The second inequality fixes the layout's translational degree of freedom: without it, any feasible solution can be shifted horizontally by an arbitrary amount.
Requiring $x(v) \geq w(v)/2$ for every $v$ bounds every left edge below by $0$, and minimization of $W$ then drives at least one left edge to exactly $0$, so $W$ equals the absolute width of the drawing.

\textbf{Model size.}
The dominant constraint groups are the non-overlap inequalities, which contribute $O(n^2)$ constraints in the worst case, and the transitivity inequalities, which contribute $O(n \cdot \Delta^2)$ from the same-parent scope with $\Delta$ being the max degree.

\textbf{Correctness.}
Any feasible $(x, o, W)$ corresponds to a unique sibling order $\pi$ and a tidy layered drawing with $W$ equal to its width by~\eqref{eq:width-bound}.
Conversely, every $\pi$ and its tidy drawing yield a feasible $(x, o, W)$.
Minimizing $W$ over feasible solutions therefore returns $W^\ast(T)$.

\subsection{Implementation Details}

The formulation above admits several refinements in our implementation.

\textbf{Symmetry breaking.}
Every feasible drawing has a mirror image of equal width obtained by reversing the sibling order at every internal vertex.
We break this global reflection symmetry by fixing the relative order of two children of the root: if the root has at least two children, we choose a canonical pair $c_1, c_2 \in \mathrm{children}(r)$ and add the bound
\begin{equation}
o_{c_1, c_2} = 1.
\label{eq:sym-break}
\end{equation}
This eliminates one representative from each mirror pair, halving the search tree without affecting the optimum. 
If the root has fewer than two children, we apply the bound at the first descendant with at least two children.

\textbf{Warm start.}
We seed the solver with a feasible solution obtained by laying out $T$ in the input sibling order using a post-order traversal: leaves are placed left-to-right at consecutive horizontal positions, and internal vertices are placed at the midpoint of their leftmost and rightmost children.
The resulting positions and ordering bits are passed to Gurobi as an MIP start, providing an immediate upper bound on $W^\ast(T)$ that the solver uses to prune the branch-and-bound tree.

During development we observed that symmetry breaking noticeably reduced
solve times, while the warm start had only a minor effect, presumably
because Gurobi quickly finds feasible solutions of comparable quality on
its own.
We did not perform a systematic ablation of these refinements and leave a
quantitative assessment to future work.

\section{A Simulated Annealing Heuristic}
\label{sec:heuristic}

The MILP scales poorly beyond a hundred vertices.
For larger instances, we use simulated annealing~\cite{KirkpatrickGV83,LaarhovenA87} (SA) to search over sibling orders, evaluating each candidate by running van der Ploeg's algorithm~\cite{Ploeg14}, which yields a tidy layered drawing of minimum width in $O(n)$ time.
Several properties of \mwtdr make SA a natural fit.
The search space is purely combinatorial (a permutation per internal vertex), a local move (swapping two siblings) is cheap to generate, and the objective is fast to evaluate exactly via~\cite{Ploeg14}.
SA is a well-established metaheuristic for such combinatorial layout problems, also within graph drawing~\cite{DavidsonH96,NollenburgW23,NollenburgW24}, and requires only a handful of parameters.
We do not claim that SA is the best possible metaheuristic for \mwtdr, and a systematic comparison with alternatives, such as dedicated algorithms or other metaheuristics, is left for future work.

\textbf{State representation and neighborhood.}
The state $s$ of the search is a sibling order at every internal vertex.
A neighboring state is obtained by exchanging two siblings of an internal vertex of degree at least two.
This is illustrated in \cref{fig:sa_step}.
The initial state is obtained by assigning a random permutation of the children at every internal vertex.

\textbf{Acceptance and cooling.}
At each step, a neighboring state $s'$ is generated from the current state $s$ and accepted with the standard Metropolis criterion: with probability one if $E(s') \leq E(s)$, and with probability $\exp(-(E(s') - E(s))/\tau)$ otherwise, where $\tau$ is the current temperature.
The temperature follows a geometric schedule, multiplied by a cooling rate $\alpha \in (0, 1)$ at every iteration.
The algorithm terminates when the temperature falls below a fixed threshold $\tau_{min}$ or when a hard iteration cap is reached.

\textbf{Two-phase variant.}
To escape local minima, we use a two-phase variant.
The first phase runs to completion as above, then the second phase resumes from the best state of the first phase, with the temperature reset to a value proportional to the achieved width.

\textbf{Lazy evaluation.}
To reduce the per-iteration cost of energy evaluation, we evaluate the neighboring state's energy only every $k$ iterations.
On the intermediate iterations, the neighbor is accepted without evaluation, which trades solution quality for speed. 
The intermediate states are tentative, and if the candidate state at the next evaluation is rejected, all of them are discarded, and the search resumes from the last accepted state.

\textbf{Incremental evaluation.}
Each energy evaluation currently reruns van der Ploeg's
algorithm~\cite{Ploeg14} from scratch. An incremental variant appears
feasible as the algorithm computes a contour for every subtree in a bottom-up
pass, and a swap of two siblings at an internal vertex $v$ leaves all
contours below $v$ unchanged, so re-merging the children's contours at $v$
and along the path from $v$ to the root suffices to obtain the new width.
The worst-case cost per update remains linear, but the practical cost depends on the depth of the swap. 
We have not implemented this optimization and leave it as an engineering improvement for interactive settings.

\textbf{Parameters.}
We tuned the SA parameters using the hyperparameter-optimization framework Optuna~\cite{AkibaSYOK19}, minimizing the achieved width across 100 synthetic instances generated by the same generator as the evaluation benchmark.
The resulting configuration estimates $\tau_0$ per instance such that the initial acceptance probability is approximately $0.9$, and sets $\tau_{\min} = 0.01$, cooling rate $\alpha = 0.999$, a hard cap of $5000$ iterations, a lazy-evaluation interval $k = 5$, and a two-phase scaling factor $\beta = 2.0$.
Across the Optuna trials, solution quality was most sensitive to the cooling rate and largely robust to the remaining parameters.

\begin{figure}[t]
  \centering
  \begin{subfigure}[b]{0.45\textwidth}
    \centering
    \includegraphics[width=\textwidth]{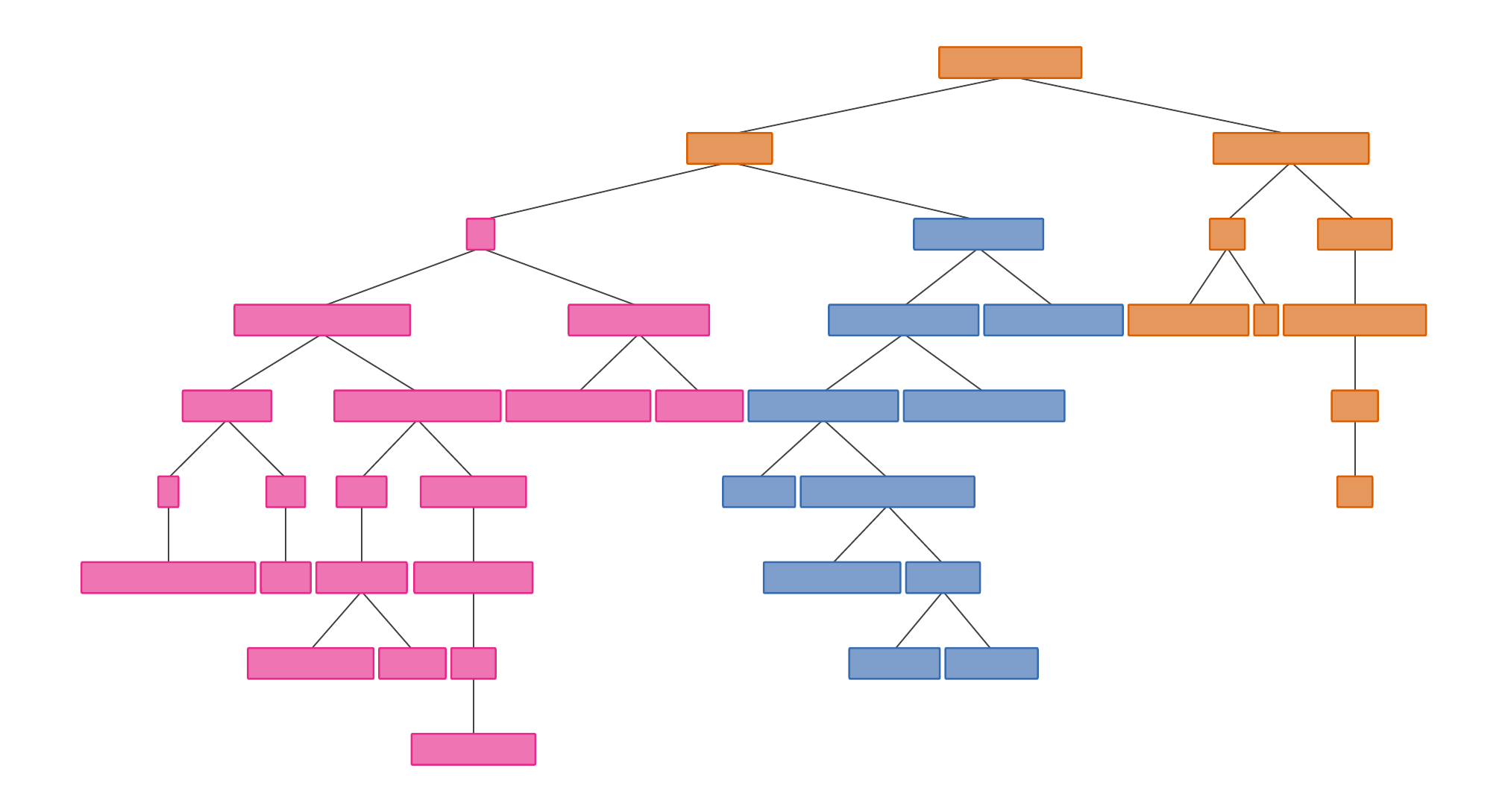}
  \end{subfigure}
  \hfill
  \begin{subfigure}[b]{0.45\textwidth}
    \centering
    \includegraphics[width=\textwidth]{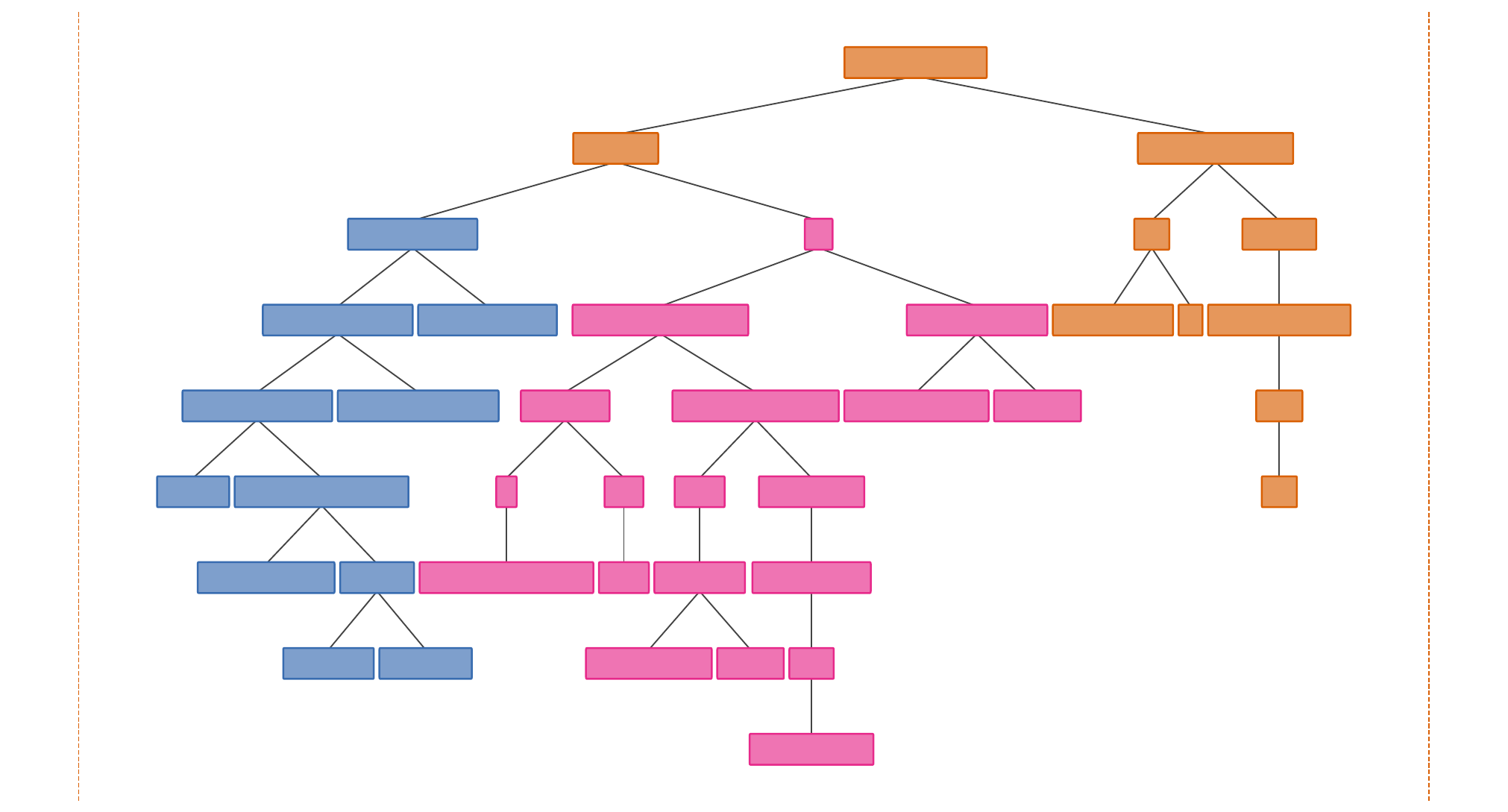}
  \end{subfigure}
  \caption{One step of the simulated annealing heuristic showing before (left) and after (right). The pink and blue vertices are swapped in the order of their parent, decreasing $W(T)$ by $12.6\%$.}
  \label{fig:sa_step}
\end{figure}
\section{Experimental Setup}
\label{sec:experiments}

We evaluate the MILP of \cref{sec:milp} and the simulated annealing heuristic of \cref{sec:heuristic} against a simple baseline on a suite of synthetic and real-world trees.
The evaluation has three goals.
First, to verify that the heuristic finds layouts close to the MILP optimum on instances where the MILP terminates.
Second, to characterize how each method scales as the input grows.
Third, to understand how tree shape and vertex-size variability affect the relative performance of the methods.
Source code, datasets, and scripts to reproduce all reported results are available in the supplemental material.

\subsection{Methods}

We compare three methods:
\begin{itemize}
  \item \mRandom. A single sibling order sampled uniformly at random from the search space, laid out with van der Ploeg's algorithm~\cite{Ploeg14}. This baseline measures whether non-trivial search is needed.
  
  \item \mSA. The simulated annealing heuristic of \cref{sec:heuristic}. 
  
  \item \mMILP. The mixed-integer linear program of \cref{sec:milp}, solved with Gurobi\footnote{https://www.gurobi.com/} with a time limit of 3600 seconds per instance. 
\end{itemize}

\subsection{Datasets}

\begin{figure}[t]
  \centering
  \begin{minipage}{0.05\textwidth}\strut\end{minipage}\hfill
  \begin{minipage}{0.30\textwidth}\centering Max degree 2\end{minipage}\hfill
  \begin{minipage}{0.30\textwidth}\centering Max degree 7\end{minipage}\hfill
  \begin{minipage}{0.30\textwidth}\centering Max degree 12\end{minipage}\\[2pt]
  \begin{minipage}{0.05\textwidth}\centering\rotatebox{90}{Shallow}\end{minipage}\hfill
  \begin{minipage}{0.30\textwidth}\includegraphics[width=\textwidth]{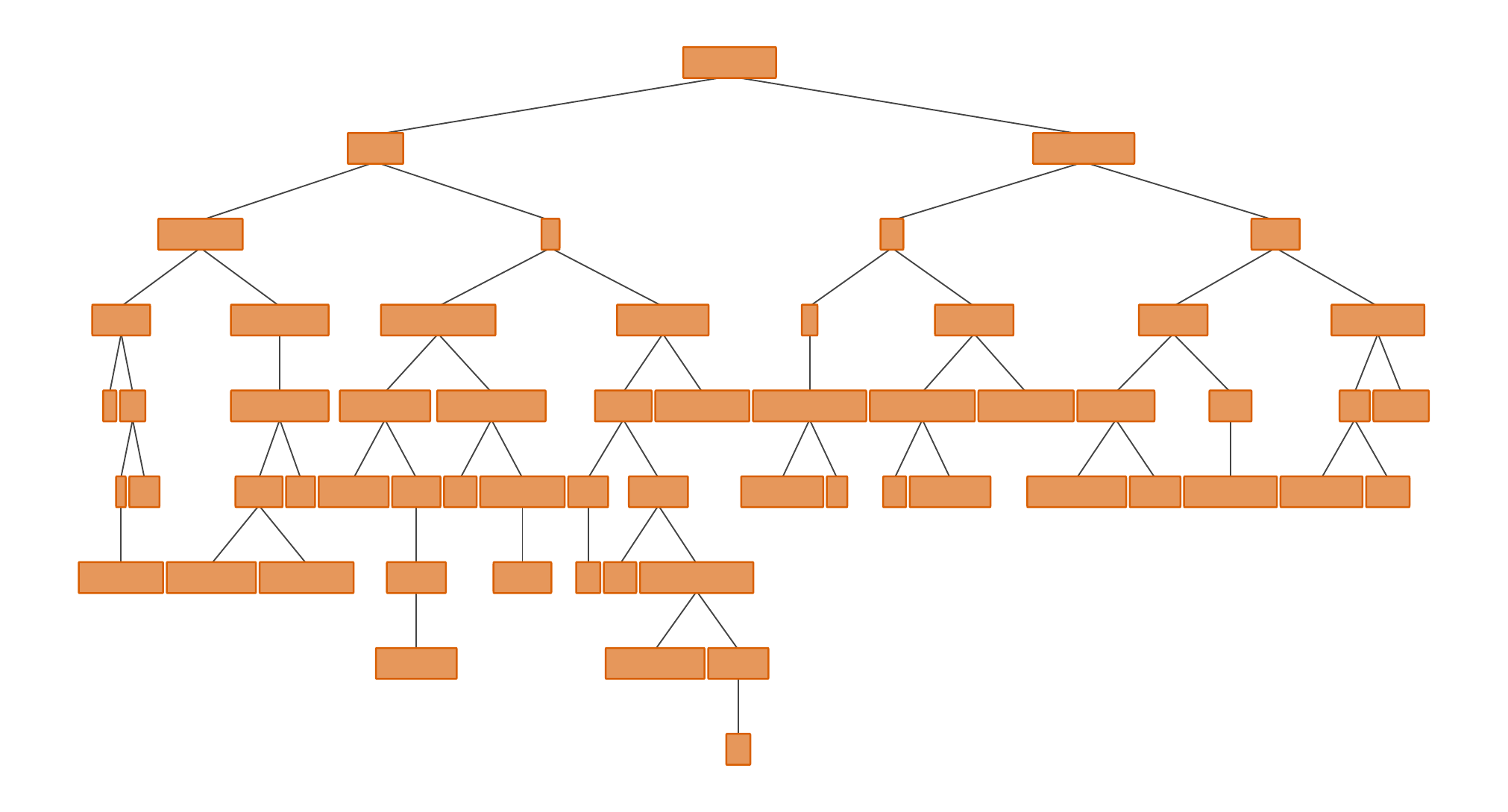}\end{minipage}\hfill
  \begin{minipage}{0.30\textwidth}\includegraphics[width=\textwidth]{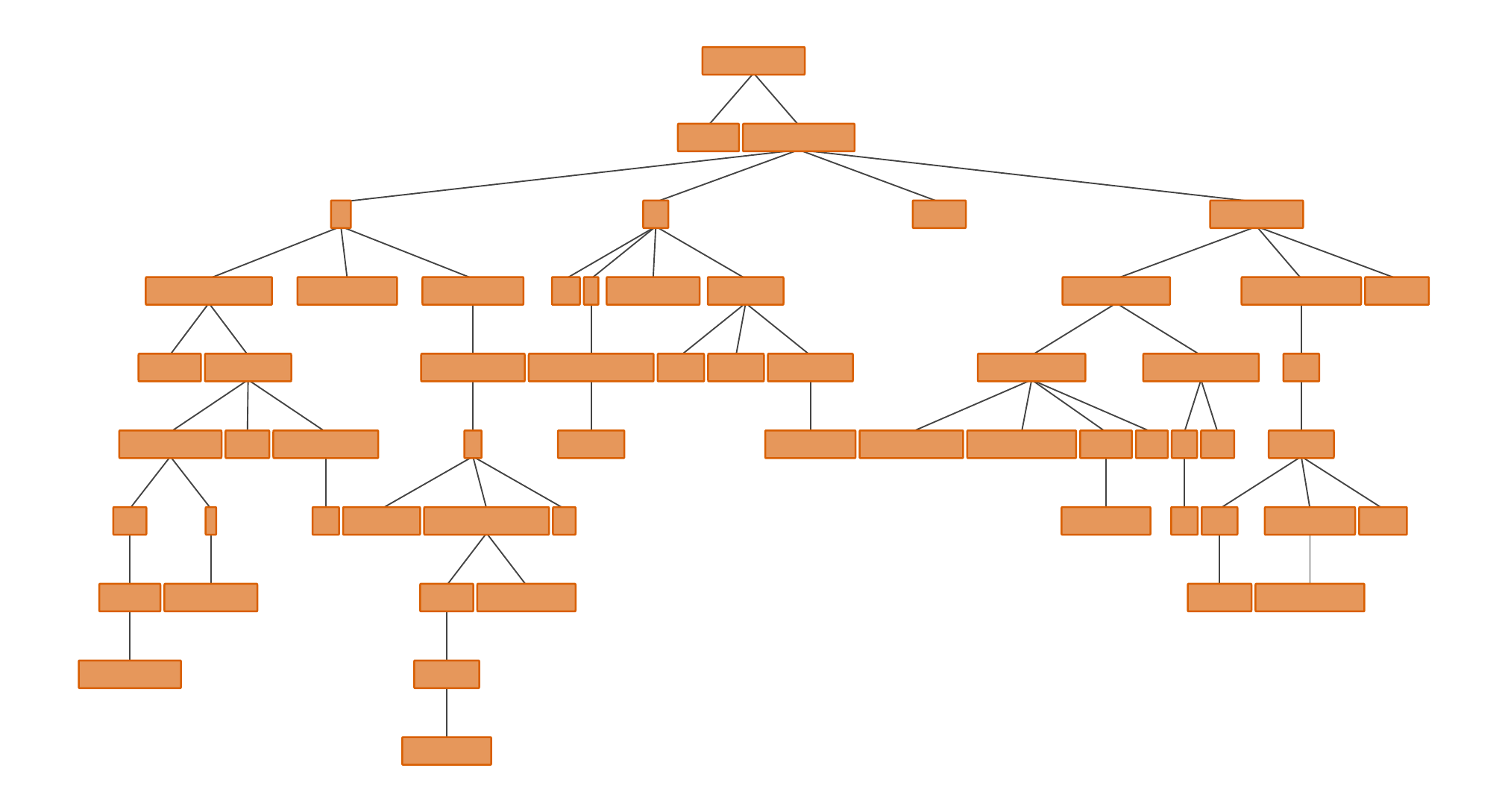}\end{minipage}\hfill
  \begin{minipage}{0.30\textwidth}\includegraphics[width=\textwidth]{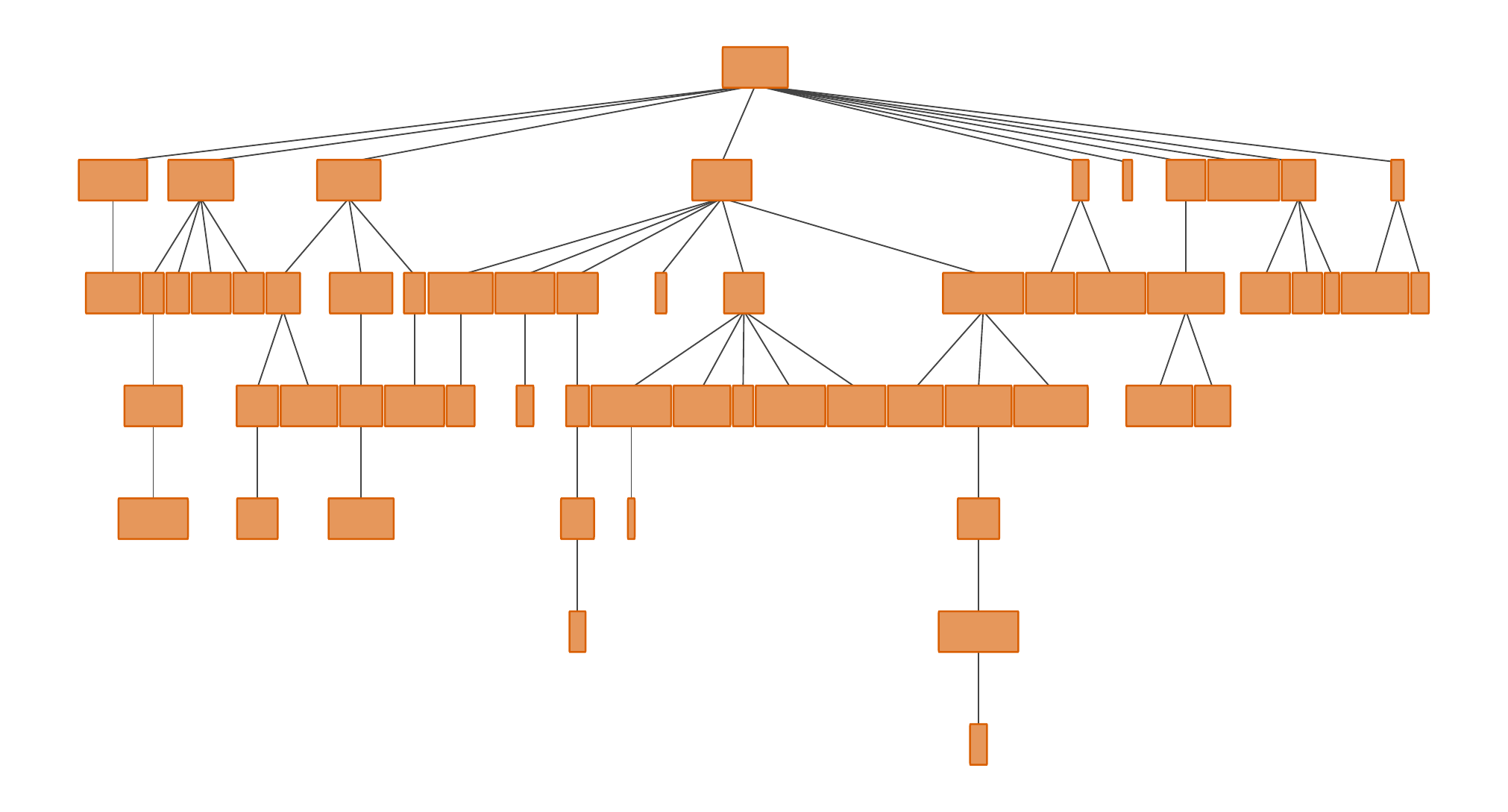}\end{minipage}\\[4pt]
  \begin{minipage}{0.05\textwidth}\centering\rotatebox{90}{Random}\end{minipage}\hfill
  \begin{minipage}{0.30\textwidth}\includegraphics[width=\textwidth]{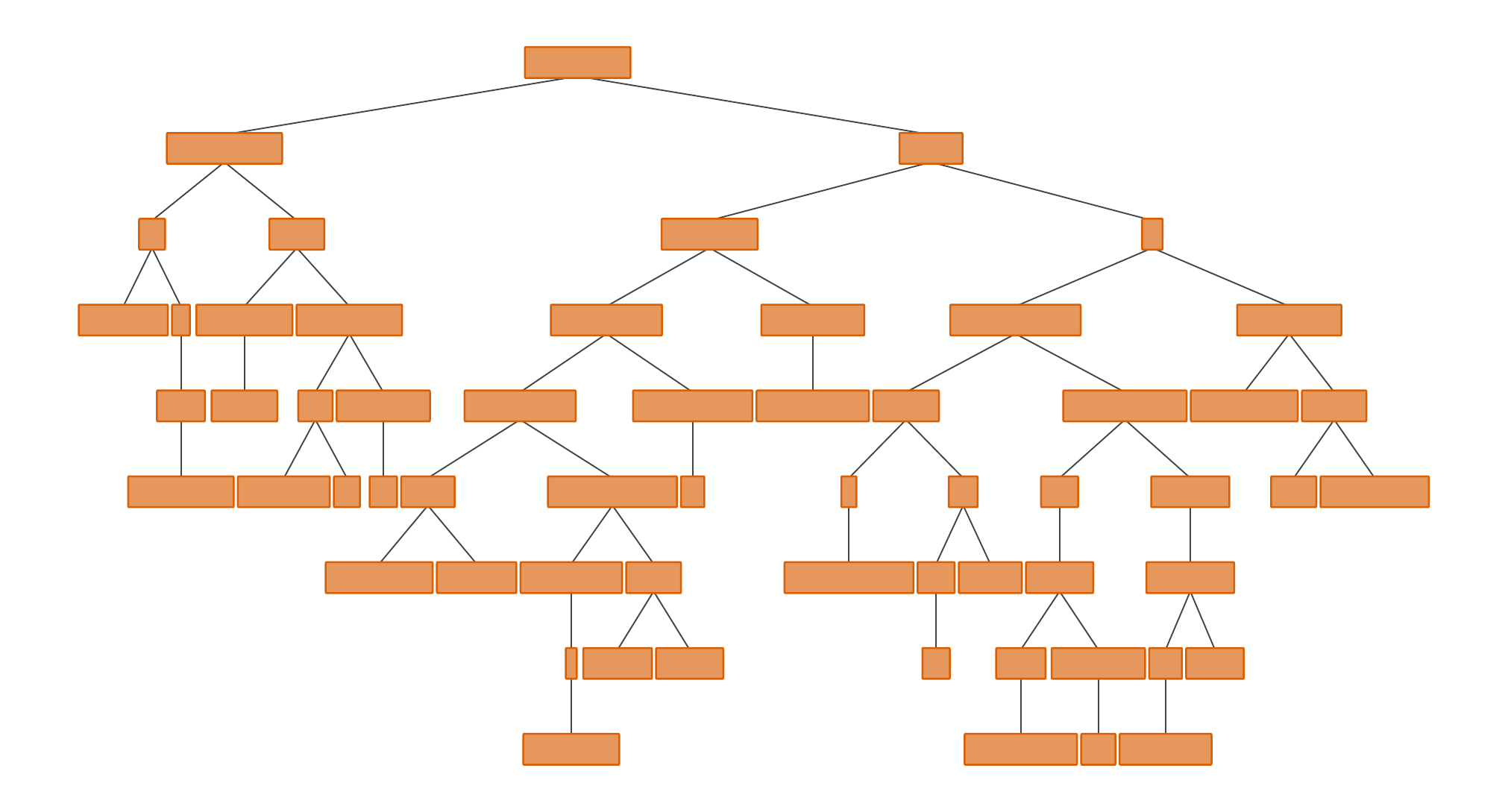}\end{minipage}\hfill
  \begin{minipage}{0.30\textwidth}\includegraphics[width=\textwidth]{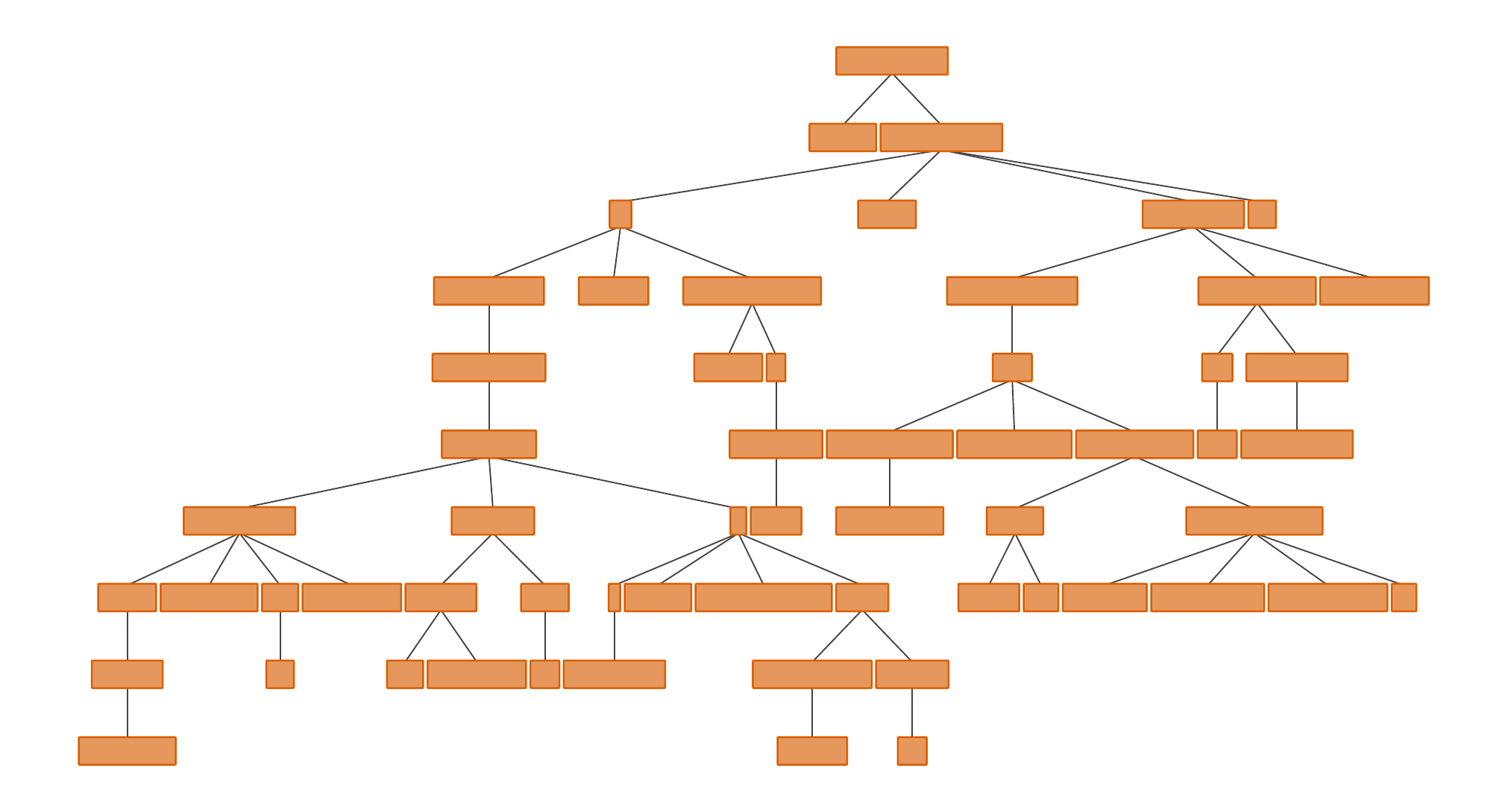}\end{minipage}\hfill
  \begin{minipage}{0.30\textwidth}\includegraphics[width=\textwidth]{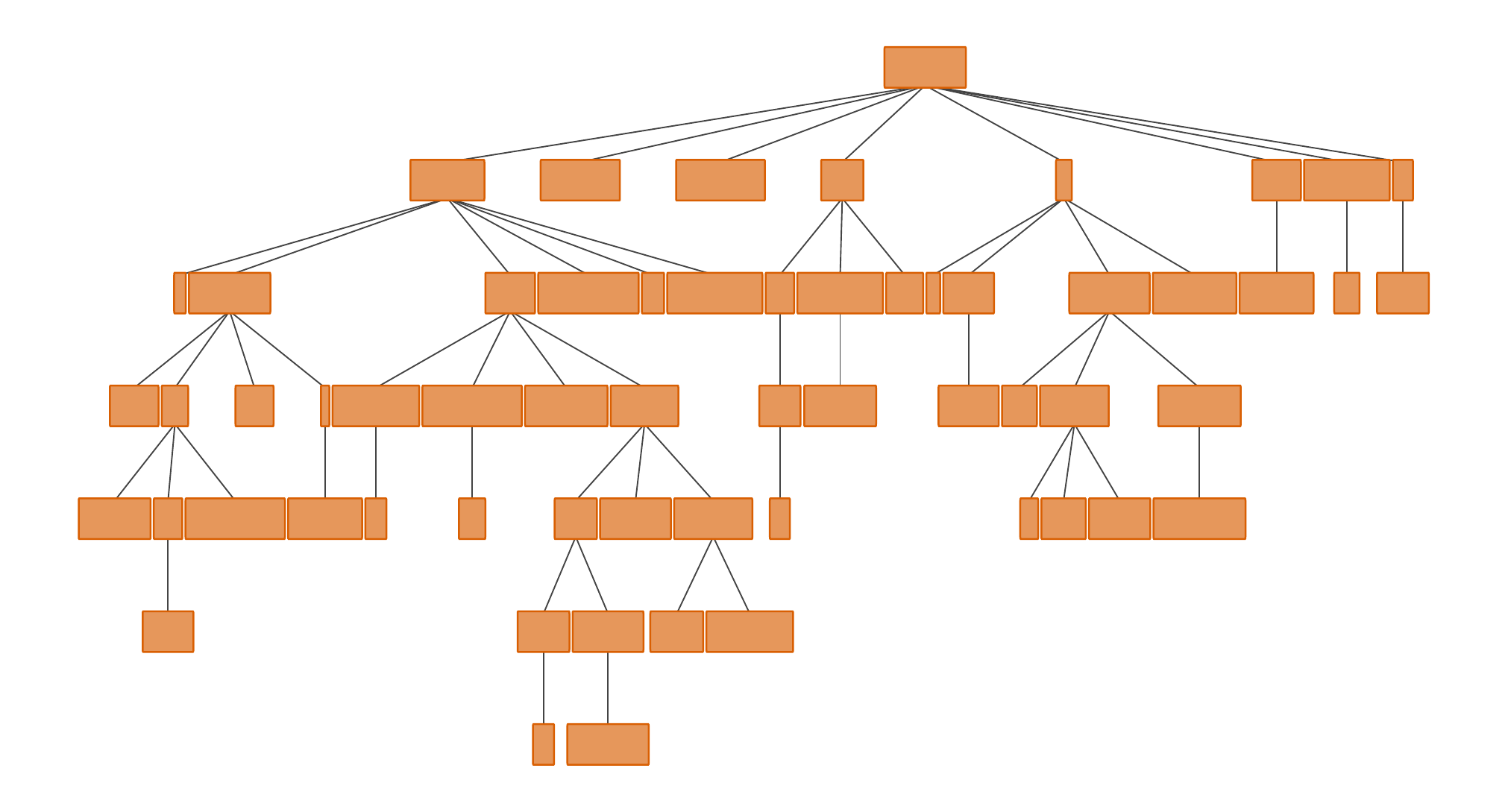}\end{minipage}\\[4pt]
  \begin{minipage}{0.05\textwidth}\centering\rotatebox{90}{Deep}\end{minipage}\hfill
  \begin{minipage}{0.30\textwidth}\includegraphics[width=\textwidth]{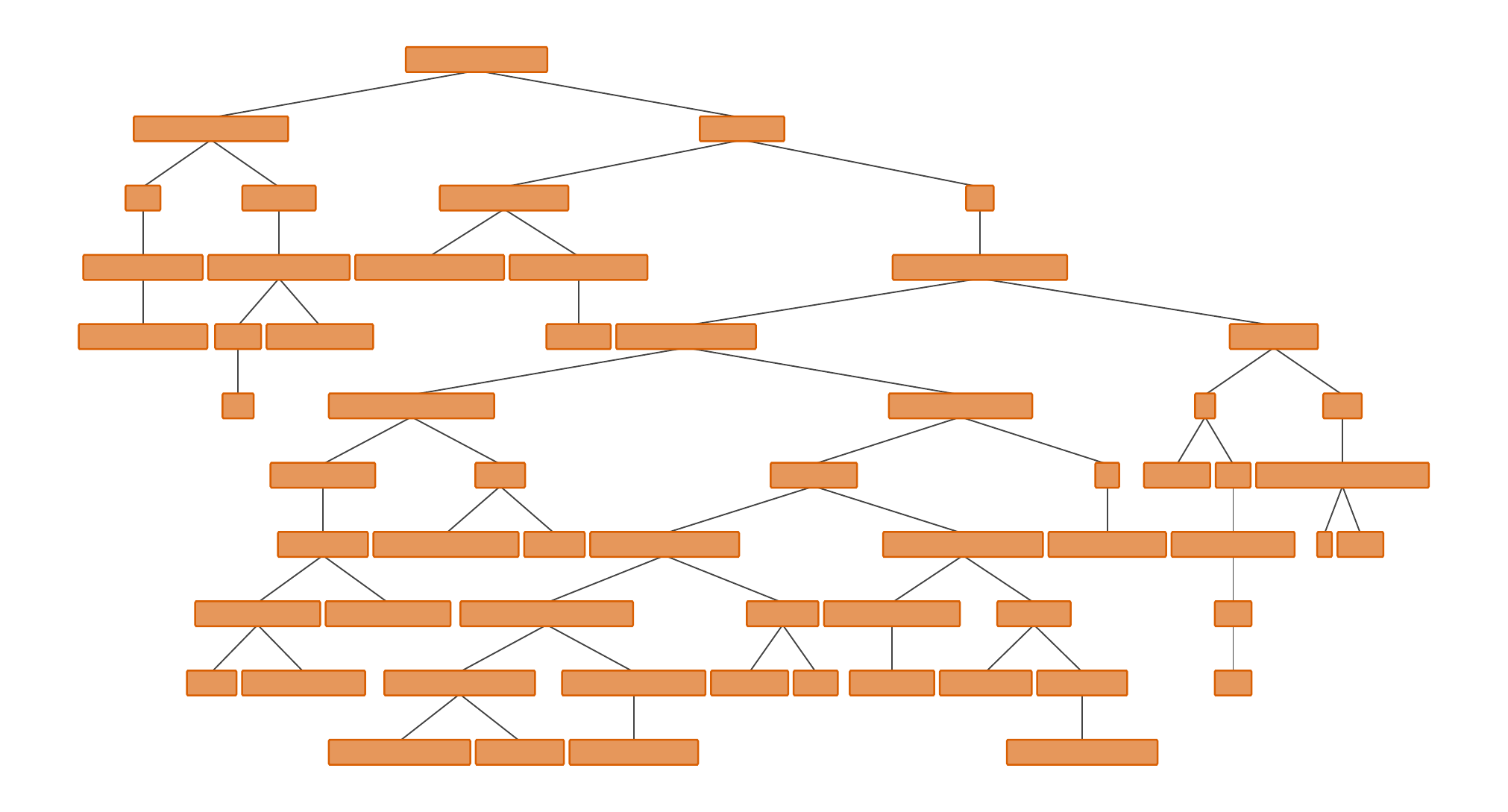}\end{minipage}\hfill
  \begin{minipage}{0.30\textwidth}\includegraphics[width=\textwidth]{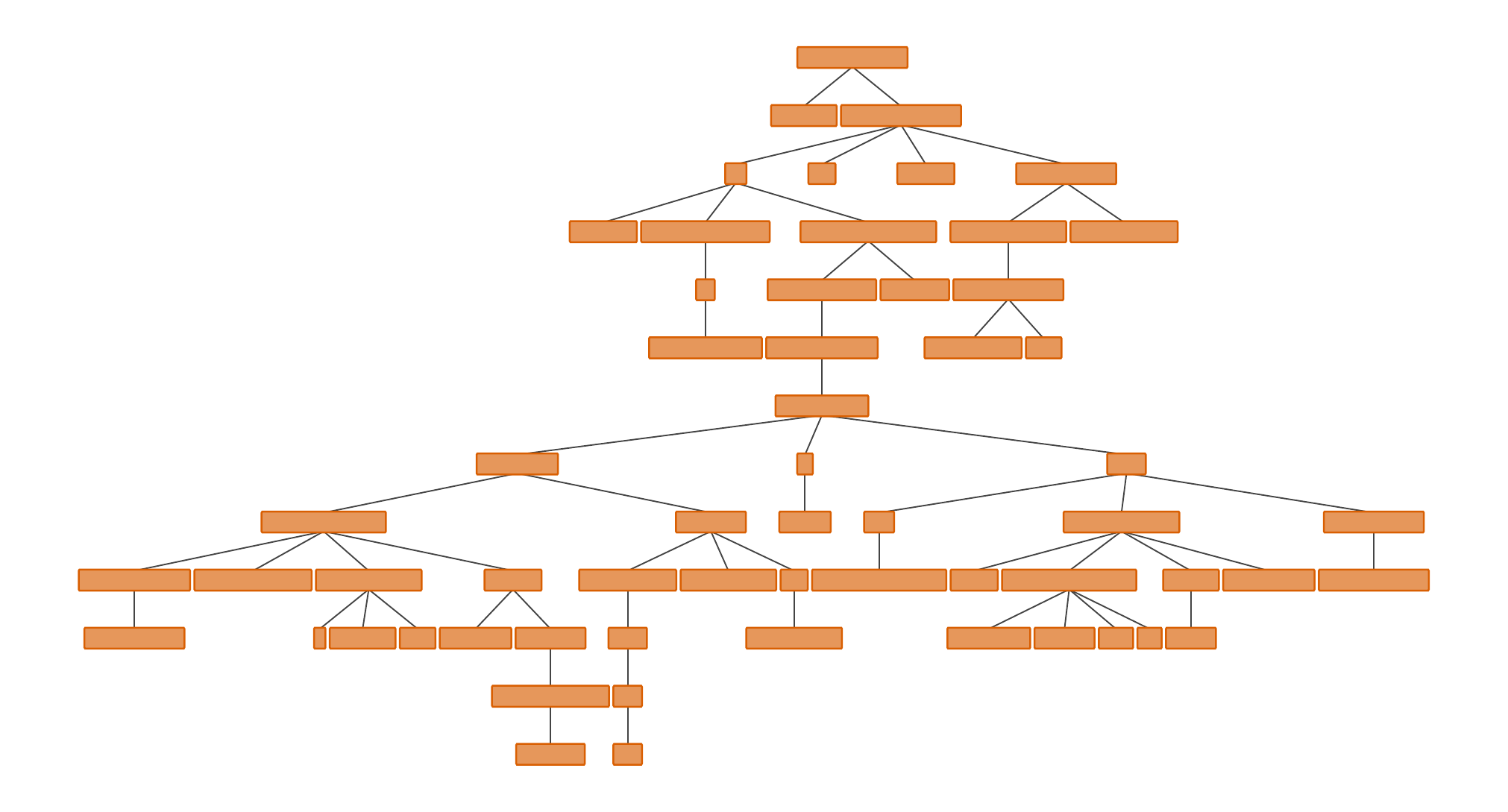}\end{minipage}\hfill
  \begin{minipage}{0.30\textwidth}\includegraphics[width=\textwidth]{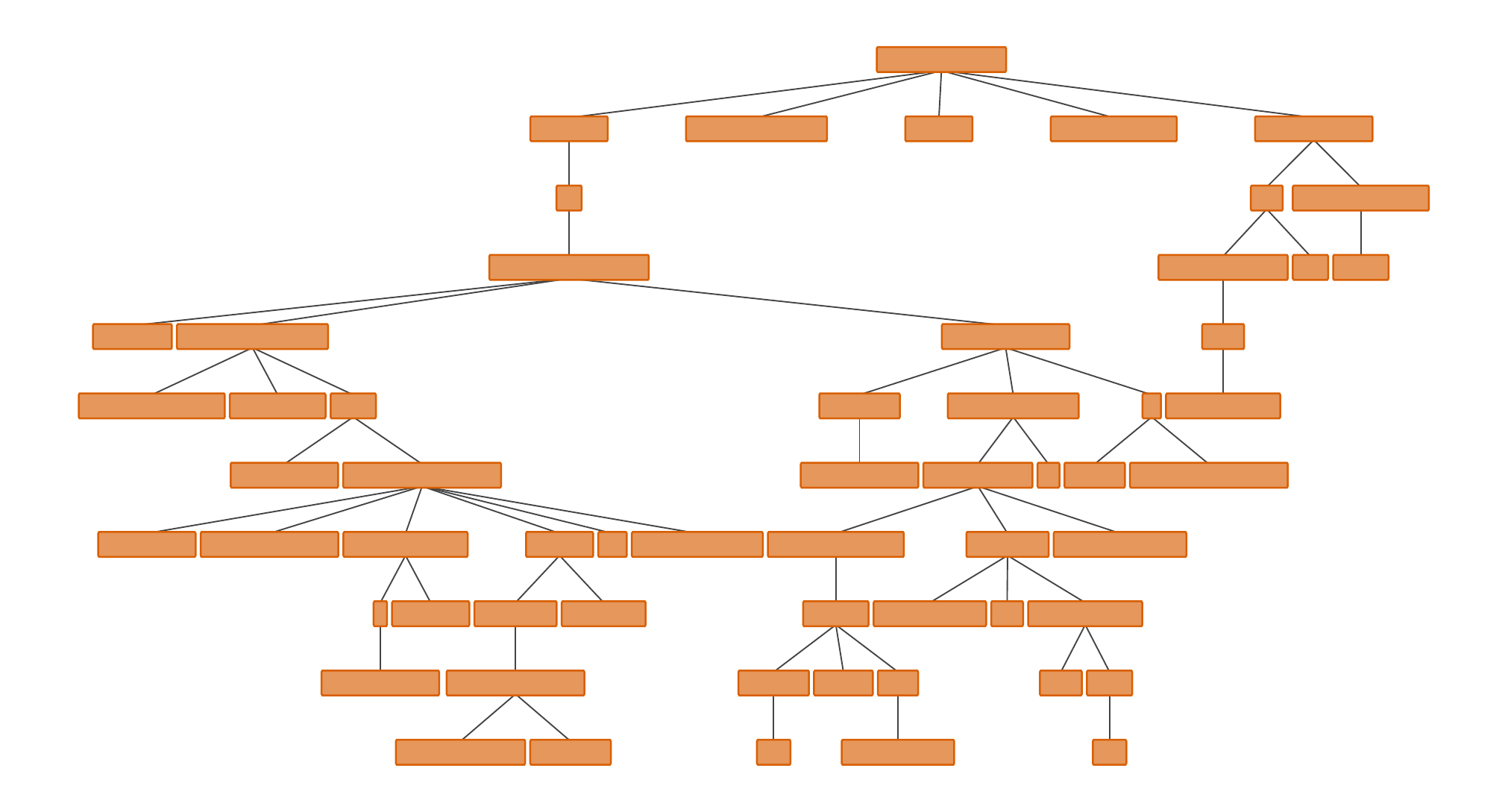}\end{minipage}
  \caption{Representative synthetic trees across the nine (shape, degree cap) combinations, each generated with $n = 60$ vertices. All instances are optimized with SA.}
  \label{fig:synthetic-grid}
\end{figure}

We test on synthetic trees generated by a parameterized random model, and on real-world hierarchies drawn from automated theorem-prover solutions.

\textbf{Synthetic trees.}
We generate trees of varying sizes and shapes using a depth-biased preferential attachment model.
The first vertex is the root.
Every other vertex $u$ is attached to an existing vertex $v$ chosen with probability proportional to $(1 + d(v))^\gamma$, where $d(v)$ is the depth of $v$ and $\gamma$ controls the shape:
\begin{itemize}
  \item \textbf{Shallow} ($\gamma = -2$): shallow vertices preferred; trees approach the shape of complete $\Delta$-ary trees.
  \item \textbf{Random} ($\gamma = 0$): every existing vertex equally likely.
  \item \textbf{Deep} ($\gamma = 2$): deep vertices preferred; trees feature long primary paths and few short branches.
\end{itemize}
We also cap the maximum degree at $2$ (binary trees), $7$, and $12$, by rejecting attachment proposals to vertices that have already reached the cap.
For every vertex, the width $w(v)$ is drawn independently from a uniform distribution on the interval $[1, 10]$.
We generate trees with $n \in \{20, 25, \dots, 200\}$ for each (shape, degree) with $5$ independent trees per combination (1665 total). We illustrate this for $n=60$ in \cref{fig:synthetic-grid}.

\textbf{Real-world data.}
We use proof trees from the TSTP archive~\cite{Sutcliffe17}, a repository of automated theorem-prover solutions.
A theorem proof is represented as a directed acyclic graph, where each vertex is a derivation step (an axiom or an inference rule application), and the character length of the associated logical formula determines the respective width.
We convert each DAG into a tree by duplicating any subderivation that is referenced more than once, so that every internal vertex has a unique parent.
In total, we sample 20 proof trees stratified by size (small/medium/large) with $n \in [13,867]$ vertices (TSTP dataset). An example is shown in \cref{fig:realworld1}.

\begin{figure}[t]
  \centering
  \begin{subfigure}[b]{0.48\textwidth}
    \centering
    \includegraphics[width=\textwidth]{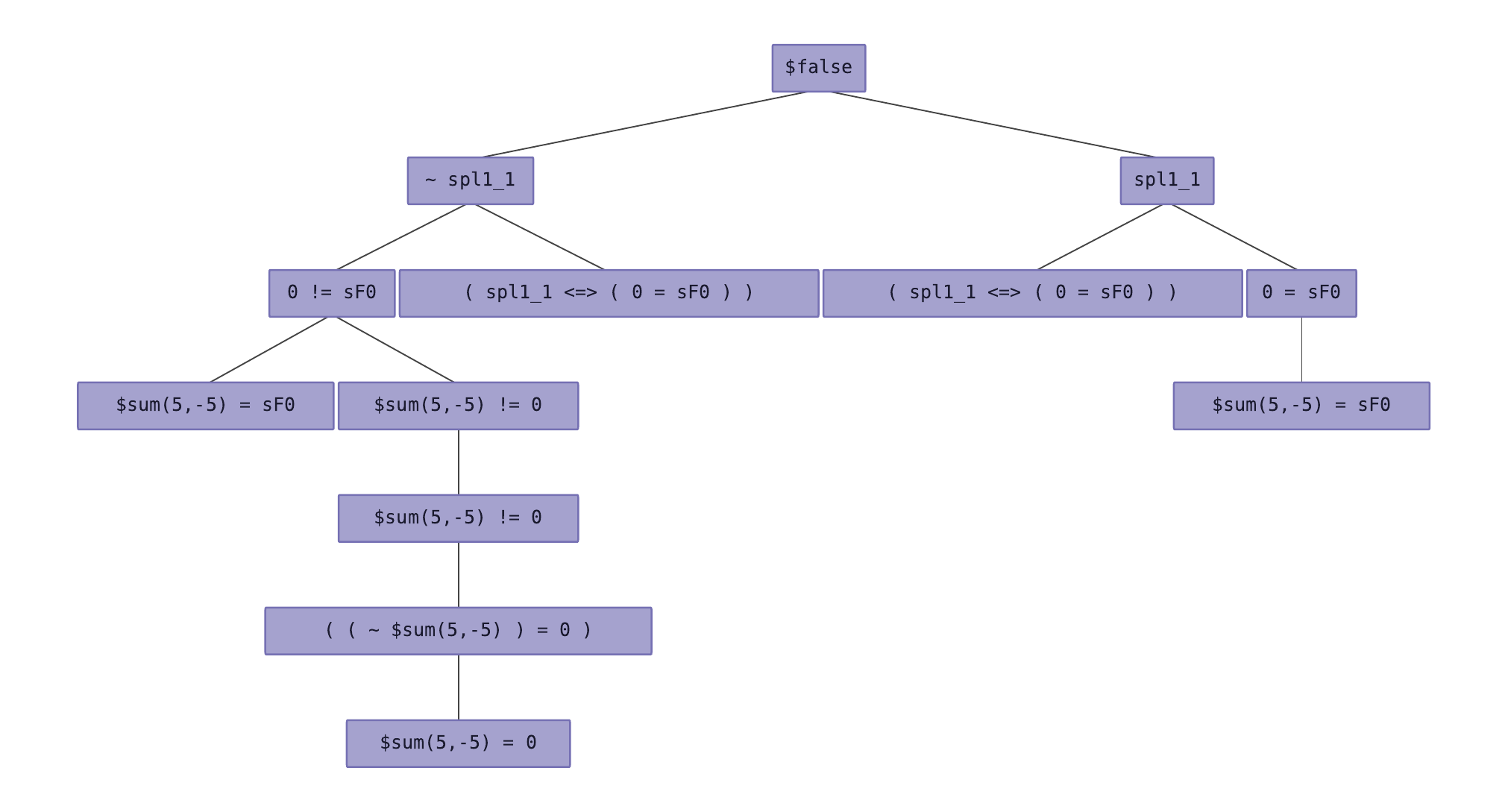}
  \end{subfigure}
  \hfill
  \begin{subfigure}[b]{0.48\textwidth}
    \centering
    \includegraphics[width=\textwidth]{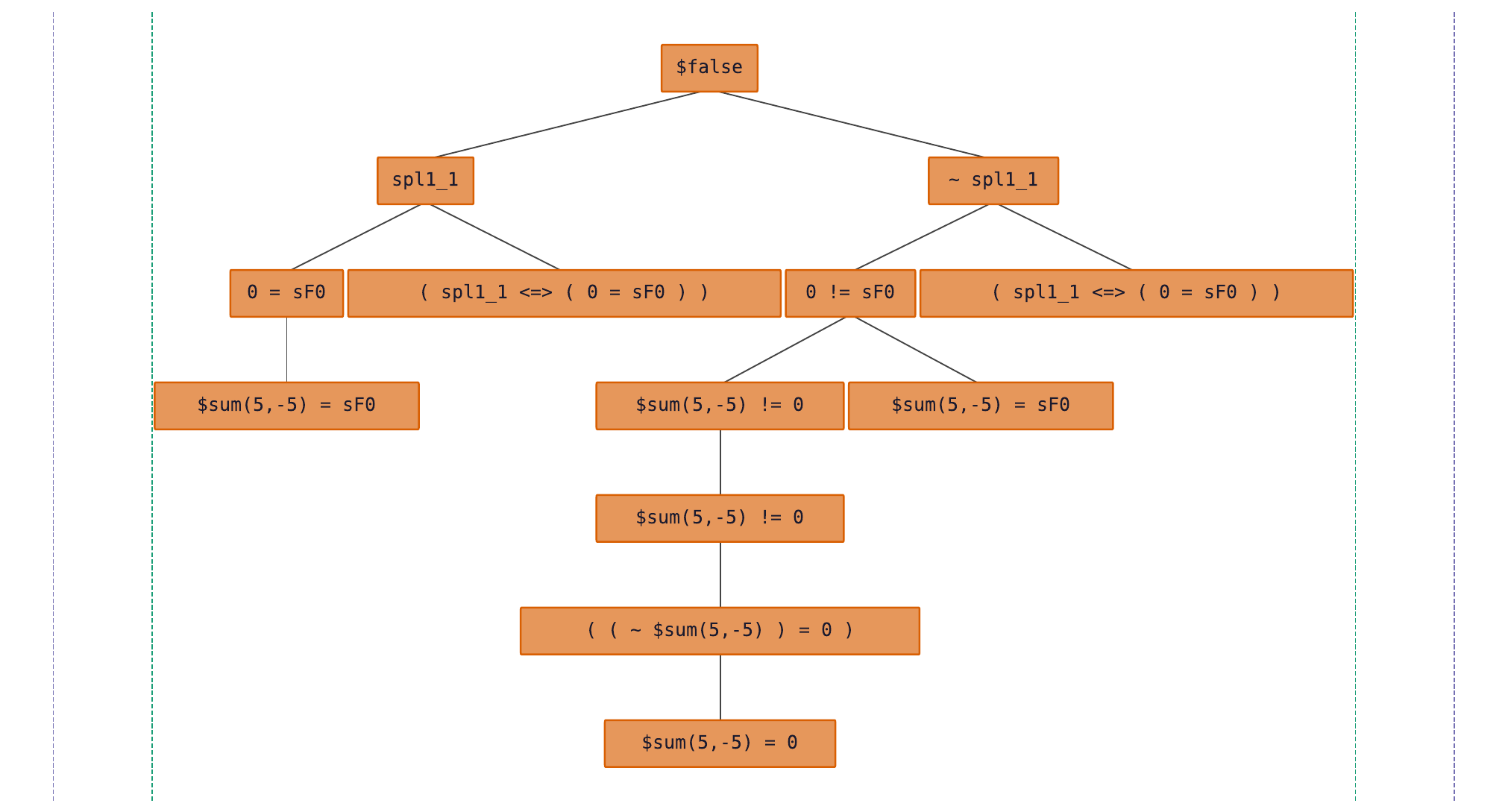}
  \end{subfigure}
  \caption{A real-world proof tree from the TSTP dataset (proof-3), drawn under (left) a random sibling order and (right) the order found by our SA heuristic, which achieves optimal quality. Reordering reduces $W(T)$ by $14.1\%$.}
  \label{fig:realworld1}
\end{figure}

\subsection{Setup and Metrics}
All experiments are run on a server with two Intel Xeon Platinum 8480+ CPUs, restricted to 4 cores and 32GB of RAM, running Ubuntu 24.04, with Python 3.12.
The MILP is solved using Gurobi 12.1 with a 3600 second time limit.
We reimplemented van der Ploeg's algorithm from an existing Rust implementation\footnote{https://github.com/zxch3n/tidy} in Python.

For each method $A$ on instance $T$, we record the achieved width $W_A(T)$ and the wall-clock runtime $\tau_A(T)$.
We compute the improvement $1 - W_A(T)/W_{\text{Random}}(T)$ over the Random baseline, and on instances where the MILP terminates, the optimality gap $W_A(T)/W^\ast(T)$.

\section{Experimental Results}
\label{sec:results}

We organize the results around four research questions on the synthetic data, followed by a separate evaluation on the real-world data to assess generalization.
Throughout, we summarize distributions of runtimes, widths, and improvements by their median together with the interquartile range (IQR), i.e., the interval $[Q_1, Q_3]$ between the $25^{\text{th}}$ and $75^{\text{th}}$ percentiles.
We report this rather than mean and standard deviation since several of our distributions are right-skewed (notably MILP runtime, which spans several orders of magnitude).

\subsection{Effect of Reordering on Drawing Width}
\label{sec:results:improvement}

\begin{figure}[t]
  \centering
  \includegraphics[width=\textwidth]{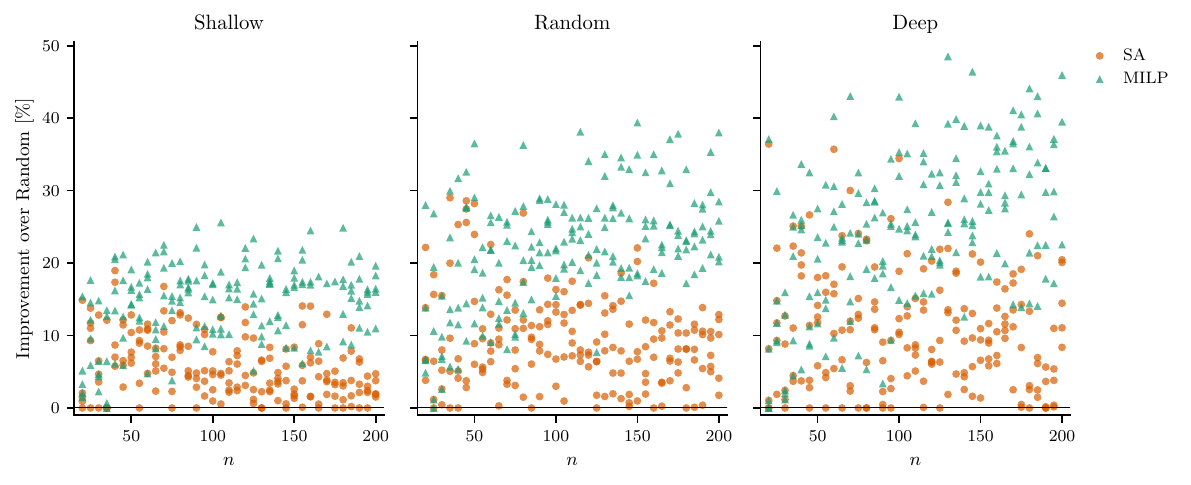}
  \caption{Improvement of SA and MILP over the random baseline as a function of tree size $n$, broken out by shape class.}
  \label{fig:results-improvement}
\end{figure}

We first ask whether reordering reduces drawing width by a meaningful amount, and how the improvement depends on tree shape (\cref{fig:results-improvement}).
On binary trees (degree cap $2$), the MILP reduces width over a random sibling order by a median of $20.1\%$ (IQR $[14.6\%, 26.2\%]$) across all instances, rising to $23.0\%$ (IQR $[16.7\%, 29.1\%]$) on trees with degree cap $12$.
The improvement depends substantially on tree shape: on binary deep trees the MILP achieves a median improvement of $25.3\%$, while on binary shallow trees only $15.6\%$, consistent with the geometric intuition that narrow, deep trees have more reordering room.
SA achieves a more modest median improvement of roughly $4\%$--$10\%$ across all configurations, with the same shape ordering: deep trees benefit most, shallow trees least.

\subsection{Quality of the Simulated Annealing Heuristic}
\label{sec:results:quality}
\begin{figure}[t]
  \centering
  \includegraphics[width=0.48\textwidth]{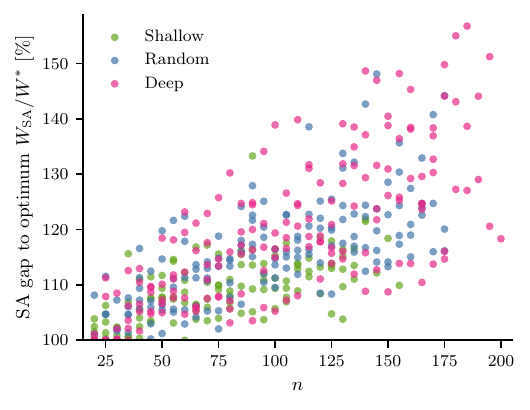}
  \caption{Relative gap $(W_{\text{SA}} - W^\ast)/W^\ast$ between SA and the MILP optimum on MILP-terminating instances, as a function of $n$.}
  \label{fig:results-gap}
\end{figure}

On instances where the MILP terminates within the time limit, we compare the width achieved by SA against the proven optimum (\cref{fig:results-gap}).
The median SA gap to the optimum is $13.6\%$ (IQR $[7.1\%, 20.6\%]$) on binary trees, $15.3\%$ on degree-$7$ trees, and $13.8\%$ on degree-$12$ trees, with the gap somewhat larger on deep trees than on shallow or random trees.
SA does not consistently reach the optimum on synthetic instances, but it is reliably within a $20-25\%$ relative gap for $75\%$ of instances and remains an order of magnitude faster than the MILP across the size range.

\subsection{Scaling Behavior}
\label{sec:results:scaling}

\begin{figure}[t]
  \centering
  \begin{minipage}{0.48\textwidth}
    \centering
    \includegraphics[width=\textwidth]{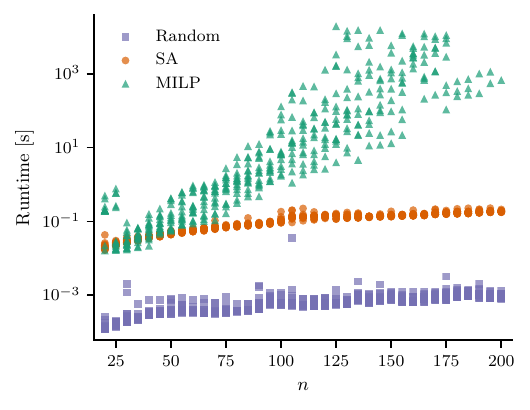}\\
  \end{minipage}\hfill
  \begin{minipage}{0.48\textwidth}
    \centering
    \includegraphics[width=\textwidth]{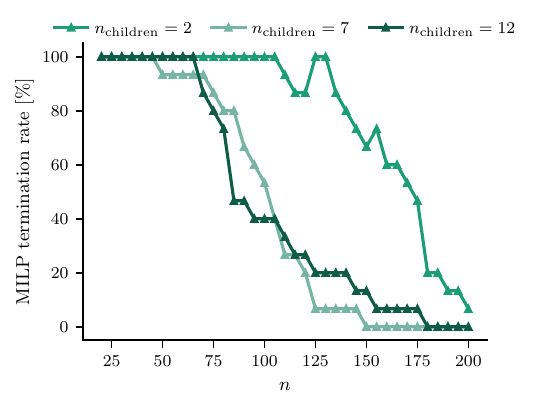}\\
  \end{minipage}
  \caption{Scaling behavior of the three methods.
  (left) Median wall-clock runtime per instance vs.\ $n$ with a log-scaled $y$-axis.
  (right) Fraction of MILP runs that terminated within the time limit.}
  \label{fig:results-scaling}
\end{figure}

We measure how the wall-clock runtime of each method grows with $n$ (\cref{fig:results-scaling}).
SA scales smoothly: median runtime is $0.13$--$0.17$\,s across all degree caps and tree sizes, and the IQRs remain narrow even at $n = 200$.
The MILP, in contrast, hits the $3600$-second time limit at sizes that depend strongly on the degree cap.
On binary trees, the MILP terminates on essentially every instance up to $n = 130$ and degrades gradually thereafter, dropping below $50\%$ termination only at $n = 175$.
On degree-$7$ trees, termination falls gradually from $n = 50$ onward, drops below $50\%$ at $n = 105$ ($40\%$), and below $10\%$ at $n = 125$; the MILP fails to solve any instance with $n \geq 150$.
On degree-$12$ trees, the cliff is even steeper, occurring at $n = 85$.
Overall, the MILP terminates on $80\%$ of binary instances but only $44\%$ and $45\%$ of degree-$7$ and degree-$12$ instances, respectively.
This behavior is explained by the structure of the search space rather than by model size, which grows only moderately (\cref{sec:milp}).
The number of sibling orders is the product of the factorials of the internal vertices' degrees, so a higher degree cap inflates the search space far more than additional vertices do.

\subsection{Effect of Tree Shape and Degree}
\label{sec:results:shape}

\begin{figure}[t]
  \centering
  \begin{minipage}{0.48\textwidth}
    \centering
    \includegraphics[width=\textwidth]{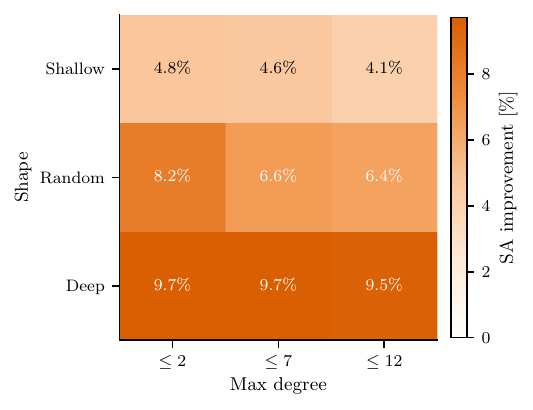}\\
  \end{minipage}\hfill
  \begin{minipage}{0.48\textwidth}
    \centering
    \includegraphics[width=\textwidth]{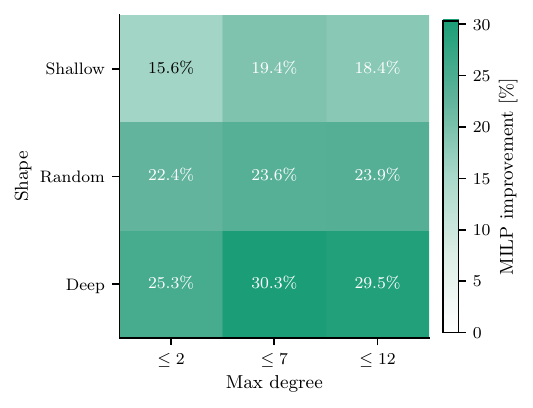}\\
  \end{minipage}
  \caption{Median improvement over the random baseline across the nine (shape, maximum degree) combinations, for (left) the SA heuristic and (right) the MILP optimum.}
  \label{fig:results-heatmap}
\end{figure}

We characterize when each method works best by comparing performance across the nine (shape, degree cap) combinations (\cref{fig:results-heatmap}).
SA improvement over Random varies from $4.1\%$ (shallow, degree $12$) to $9.7\%$ (deep, degrees $2$ and $7$); for a fixed shape the variation across degree caps is modest.
MILP improvement ranges from $15.6\%$ (shallow, degree $2$) to $30.4\%$ (deep, degree $7$).
For both methods, the shape effect dominates the degree effect: deep trees consistently outperform shallow trees regardless of degree cap, and varying the degree cap at a fixed shape changes the median improvement by about five percentage points.

\subsection{Real-world results}
\label{sec:results:realworld}

\begin{figure}[t]
  \centering
  \begin{subfigure}[b]{0.48\textwidth}
    \centering
    \includegraphics[width=\textwidth]{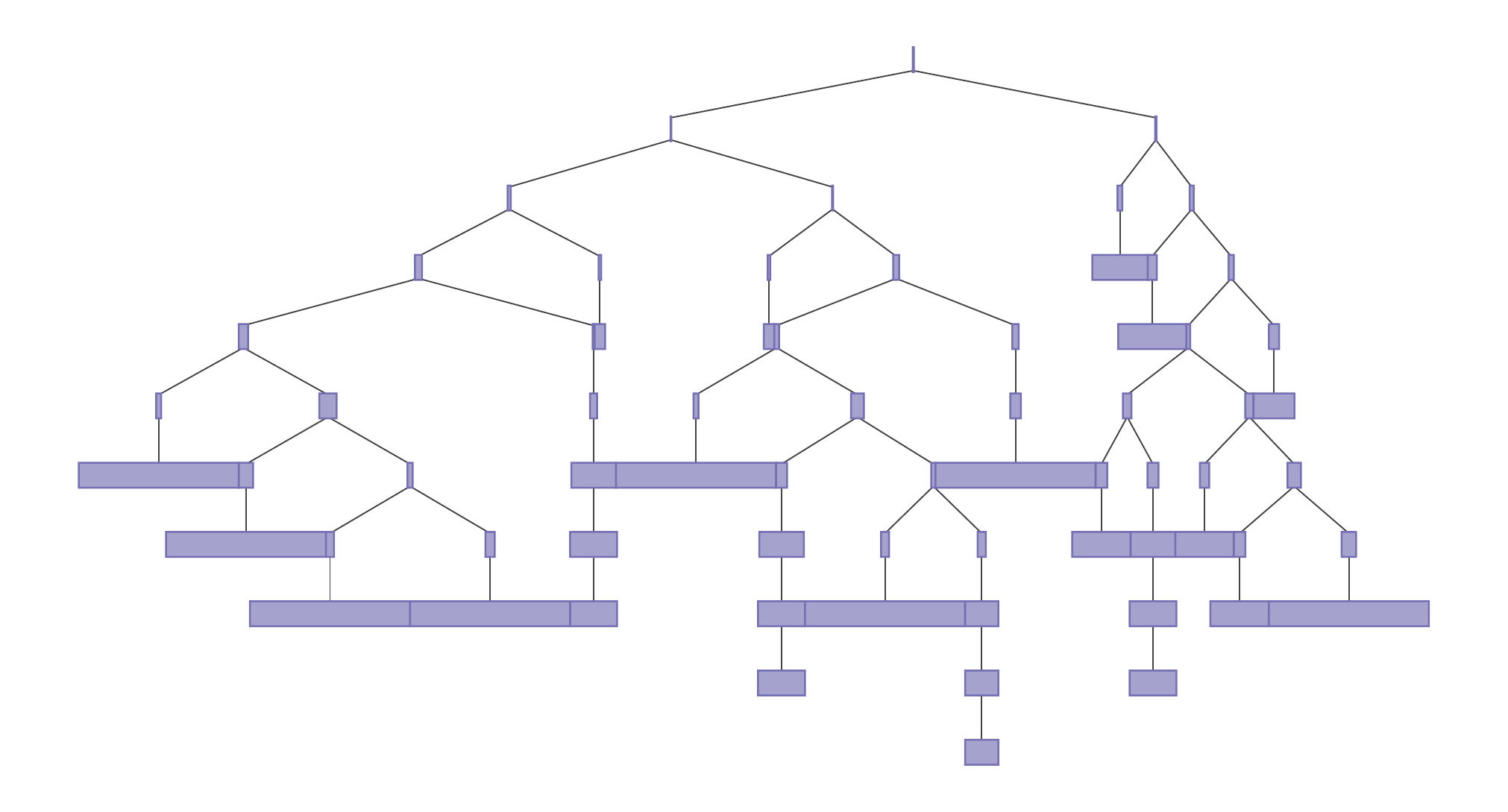}
  \end{subfigure}
  \hfill
  \begin{subfigure}[b]{0.48\textwidth}
    \centering
    \includegraphics[width=\textwidth]{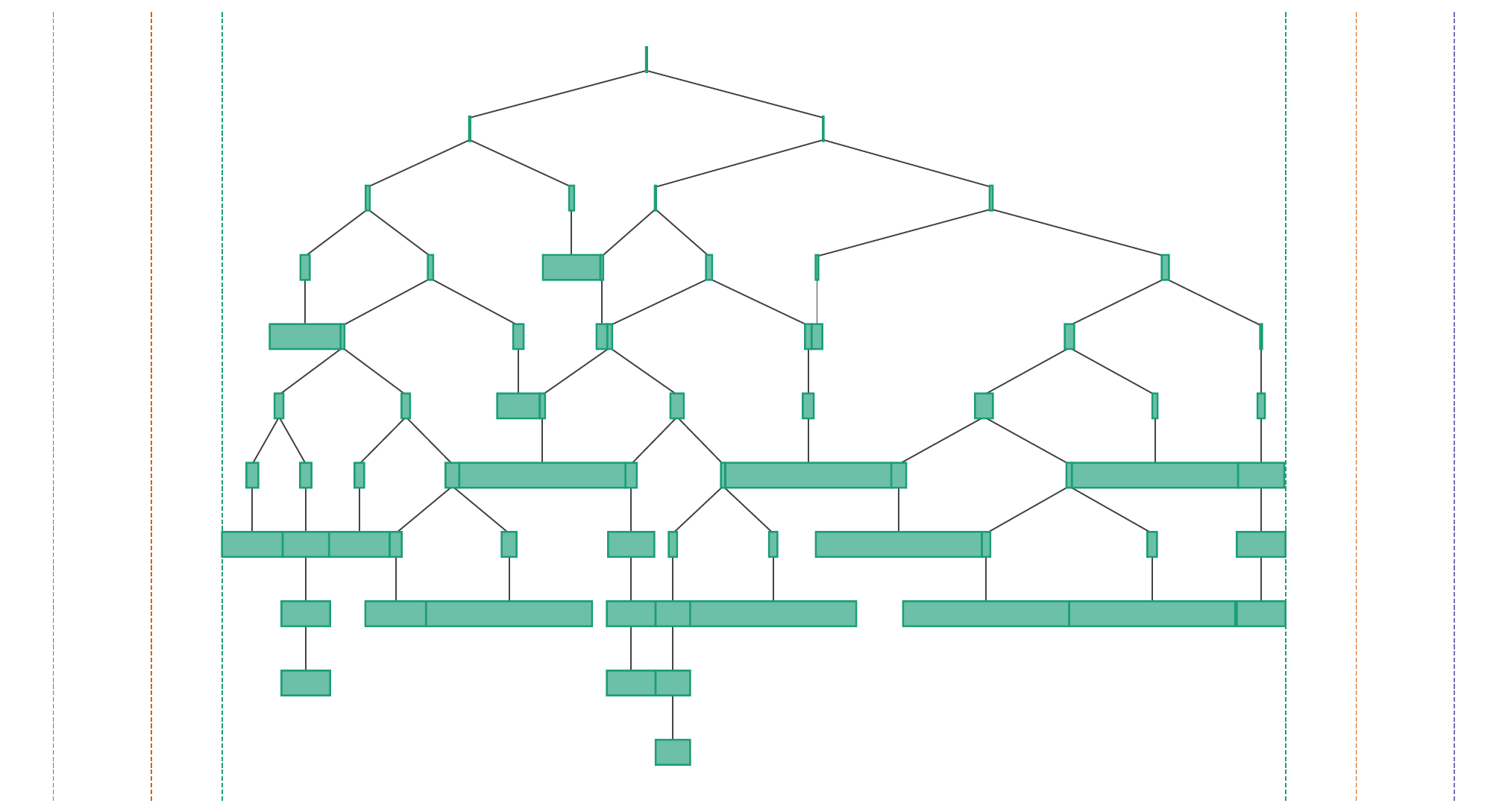}
  \end{subfigure}
  \caption{A real-world proof tree from the TSTP dataset (proof-19), drawn under (left) a random sibling order and (right) the order found by our MILP. Reordering reduces $W(T)$ by $24.1\%$.}
  \label{fig:realworld2}
\end{figure}

The MILP terminates within the time limit on 14 of the 20 instances; the remaining 6 instances ($n \in \{237, 310, 433, 867\}$ and two of $n \approx 70\text{--}80$) hit the 3600-second limit. The latter instances exhibit a very shallow structure.
On instances where the MILP terminates, the median improvement over random is $18.2\%$. \Cref{fig:realworld2} shows a representative case with $24.1\%$ reduction. 
SA matches or comes within 5\% of the MILP optimum on 11 of the 14 MILP-terminating instances, including the three cases where it achieves the optimum exactly.
On the largest instances (no MILP optimum available), SA improves over random by $0.7\%$ to $8.7\%$, a more modest range that reflects the structural property: deeper, sparser trees admit larger reordering gains than wide, flat trees.
Across the dataset, the heuristic produces a tighter layout than the random baseline on every instance except three of $n \leq 9$, where the search space is small enough that any sibling order is already near-optimal.

\section{Discussion}
\label{sec:discussion}

\textbf{Effect of tree shape on reordering room.}
Across both synthetic and real-world inputs, we observe a consistent pattern: the improvement from reordering depends strongly on the tree's shape, being largest for deep, sparse trees and smallest for wide, shallow ones.
The intuition is geometric.
At each layer, the horizontal extent is bounded below by the sum of the layer's vertex widths, and sibling ordering can only affect the width through the centering rule (A3): the parent's position depends on the leftmost and rightmost child, and propagates upward through ancestors.
When a layer is wide, for instance, because the tree has high fanout, or because vertex widths accumulate as the sum of subtree contents (as in filesystem hierarchies, document structures, or aggregate-size representations), the layer's extent is already close to the sum-of-widths lower bound, and reordering has little additional room to operate.
Conversely, when layers are narrow, the centering rule has substantial leverage: a deep subtree on one side can lean into space left by a shallow subtree on the other, and the resulting savings compound through several layers.
This observation matches our results, in which the deep class benefits substantially more from reordering than the shallow class, and our real-world experiments, in which hierarchies with high fanout or sum-of-children width semantics show modest improvements while sparser hierarchies show large ones.

\textbf{Synthetic benchmarks are harder than real-world instances.}
SA's gap to the MILP optimum differs sharply between the two datasets: $13$--$15\%$ on synthetic instances, but exact or within $5\%$ on eleven of the fourteen MILP-solved TSTP proof trees.
The depth-biased generator we use produces instances that stress the heuristic considerably more than the real-world hierarchies we tested.
We read this as evidence that the synthetic gap is a stress test rather than the heuristic's typical performance, and that practical deployments are likely to see results much closer to optimum.

\textbf{Complementary roles of the heuristic and the exact method.}
The two methods are best understood as complementary rather than competing.
SA delivers a layout in well under a second on every instance we tested, making it suitable for interactive tools.
The MILP, in contrast, is too slow for interactive use but reliably proves optimality on instances with up to roughly a hundred vertices and serves as a ground-truth oracle.
This second role is the more important contribution of the MILP: it allows future heuristics for \mwtdr to be benchmarked against a proven optimum on small-to-moderate instances, which is otherwise not possible for an \textsf{NP}-hard problem.

\textbf{Whitespace.}
Our objective minimizes drawing width unconditionally, but readability is not monotone in compactness.
Whitespace separates subtrees and guides the eye, and packing everything to the sum-of-widths lower bound can make adjacent subtrees visually indistinguishable.
\mwtdr should therefore be understood as one end of a spectrum: in practice,
a small whitespace budget or a sibling-proximity penalty likely
produces more readable drawings at marginal cost in total width.
Additionally, several aspects of readability are directly quantifiable, such as the
aspect ratio of the drawing, the uniformity of edge lengths, the angular
resolution at internal vertices, or the horizontal distance between
adjacent subtrees. 
However, it is currently unclear which metric is the appropriate choice from a user-centric perspective.

\textbf{Limitations.}
Our study has several limitations.
First, we compare only against a single heuristic and against a random baseline. 
A more competitive baseline, along with alternative metaheuristics or better-tuned annealing schedules, may close part of the gap between SA and the MILP optimum on synthetic data.
Second, our real-world dataset contains only twenty proof trees from a single domain, and broader conclusions about practical instances would require evaluation on hierarchies from other application areas.
Third, the MILP time limit of $3600$ seconds is a pragmatic rather than a principled choice: a longer limit would shift the termination cliffs in \cref{fig:results-scaling} to the right but would not change the overall scaling picture.
Fourth, we restrict to layered drawings with uniform vertex height, while
vertex widths may vary arbitrarily. Supporting non-uniform heights is a natural extension, but it changes the character of the problem: once boxes protrude into the space between layers, a drawing must avoid not only overlaps between bounding boxes but also overlaps between edges and boxes, so feasibility is no longer captured by the per-layer separation constraints (A1) alone. Additionally, a different objective function, such as the drawing area, must be considered instead.
We expect that both the MILP and SA heuristic, via additional edge--box separation constraints and an adapted objective function, can be adapted, but leave this to future work.
Fifth, our objective minimizes total width but does not distribute available space among children; classical algorithms~\cite{Ploeg14,Walker90} apply a post-processing pass that redistributes children evenly within the layer's contour, which our algorithms do not currently incorporate.

\section{Conclusion}
\label{sec:conclusion}

We introduced \mwtdr, the problem of choosing a sibling order at each internal node of a tree with sized vertices so as to minimize the width of the resulting layered tidy drawing.
We showed that the problem is \textsf{NP}-complete even for binary trees with unit-size vertices.
We then provided an exact MILP and a fast simulated annealing heuristic that performs well in practice.

There are several natural directions for future work. 
On the theoretical side, the approximability of \mwtdr is unknown: our reduction establishes only \textsf{NP}-hardness, and it would be interesting to know if a polynomial-time approximation exists.
On the modeling side, dropping the uniform-height assumption and removing the layered constraint to obtain the non-layered tidy convention of van der Ploeg~\cite{Ploeg14} would both broaden applicability, as would studying \mwtdr under richer drawing conventions such as hi-trees~\cite{MarriottSGPB11} and ortho-radial layouts.
On the evaluation side, our objective is a proxy for readability rather than a direct measure, and a user study comparing width-optimal drawings against drawings produced by alternative criteria such as area, aspect ratio, or total edge length would clarify which optimization target best serves human readers.

\bibliography{bibliography}

\appendix

\end{document}